 \newcommand{\newindthm}[2]{
   \theoremstyle{plain}
   \theoremheaderfont{\normalfont\sc}
   \theorembodyfont{\normalfont\itshape}
   \theoremseparator{.}
   \theoremsymbol{}
   \newtheorem{#1}{#2}
 }
\crefname{hypothesis}{Hypothesis}{Hypotheses}
\crefname{claim}{Claim}{Claims}
\crefname{remark}{Remark}{Remarks}
\Crefname{hypothesis}{Hypothesis}{Hypotheses}
\Crefname{rule}{Rule}{Rules}
\crefname{rule}{Rule}{Rules}
\Crefname{claim}{Claim}{Claims}
\Crefname{remark}{Remark}{Remarks}
\crefname{subsection}{Subsection}{Subsections}
\crefname{paragraph}{Paragraph}{Paragraphs}
\crefname{section}{Section}{Sections}
\crefname{algocf}{Algorithm}{Algorithms}
\Crefname{algocf}{Algorithm}{Algorithms}
\def\Rule#1{\texorpdfstring{\textsc{Rule~#1}}{Rule~#1}\xspace}
\newcommand{\Dom}{\textsc{Dominating Set}\xspace}
\newcommand{\domset}{\ensuremath{\mathcal D}\xspace}
\newcommand{\tone}{type~\texorpdfstring{$1$}{1}\xspace}
\newcommand{\ttwo}{type~\texorpdfstring{$2$}{2}\xspace}
\newcommand{\tthree}{type~\texorpdfstring{$3$}{3}\xspace}
\newcommand{\ttwothree}{type~\texorpdfstring{$2$}{2} and~\texorpdfstring{$3$}{3}\xspace}
\newcommand{\ie}{i.e.,\xspace}
\newcommand{\eg}{e.g.\xspace}
\newcommand{\etal}{et al.\xspace}
\colorlet{ref}{black}
\colorlet{wit}{black}
\def\refa{\ensuremath{{\color{ref}\rho}}\xspace}
\def\refb{\ensuremath{{\color{ref}\sigma}}\xspace}
\def\nodeu{\ensuremath{{u}}\xspace}
\def\nodev{\ensuremath{{v}}\xspace}
\def\witu{\ensuremath{{\color{wit}u}}\xspace}
\def\Oh#1{\ensuremath{\mathcal O\!\left(#1\right)}}
\newcommand{\cand}[1][]{\ensuremath{\mathcal C_{#1}}\xspace}
\newcommand{\sel}[1][]{\ensuremath{\mathcal S_{#1}}\xspace}
\newcommand{\superS}[1][]{\texorpdfstring{\ensuremath{\mathcal{S}_{#1}^{2,3}}}{S23}\xspace}
\def\refset{\ensuremath{{\color{ref}\mathsf{ref}}}}
\def\witset{\ensuremath{{\color{wit}\mathsf{wit}}}}
\def\crepr{\ensuremath{\mathsf{canRef}}}
\def\isTypeAt#1#2#3{\ensuremath{({\color{wit}{#2}} \lhd_{#1} {\color{ref}#3})}}
\def\isThreeAt#1#2{\isTypeAt{3}{#1}{#2}}
\def\isTwoThreeAt#1#2{\isTypeAt{\set{2,3}}{#1}{#2}}
\DeclareMathOperator*{\argmax}{arg\,max}
\newcommand{\setc}[2]{\ensuremath\left\{#1\;\middle|\;#2\right\}}
\newcommand{\set}[1]{\ensuremath\left\{#1\right\}}
\def\deg{\ensuremath{\mathrm{deg}}}
\def\nocomment#1{}
\newcommand{\ignore}[1]{}
\renewenvironment{equation*}{\[}{\]\ignorespacesafterend}
\begin{document}

\title{\Large Revisiting a Successful Reduction Rule for Dominating Set\footnote{
    Partially funded by the Deutsche Forschungsgemeinschaft (DFG) – ME 2088/5-2, FOR 2975.
}}
\author{
      Lukas Geis\thanks{Goethe University Frankfurt, Germany, \{lgeis,aleonhardt,umeyer,mpenschuck\}@ae.cs.uni-frankfurt.de}
      \and Alexander Leonhardt\footnotemark[2]
      \and Johannes Meintrup\thanks{University of Applied Sciences Mittelhessen, Germany, \email{johannes.meintrup@mni.thm.de}}
      \and Ulrich Meyer\footnotemark[2]
      \and Manuel Penschuck\footnotemark[2]
      \and Lukas Retschmeier\thanks{BARC, University of Copenhagen, Denmark, \email{lure@di.ku.dk}}
  }
  
\date{}
\twocolumn
\maketitle

\begin{abstract}
    Given a graph $G=(V,E)$ with $n$ vertices and $m$ edges, the \textsc{Dominating Set} problem asks for a set $\domset\subseteq V$ of minimal cardinality such that every vertex either is in $D$ or adjacent to a member of $D$.
    Although there is little hope for a kernelization algorithm on general graphs due to the $\textnormal{W}[2]$-hardness of \textsc{Dominating Set}, data reduction rules are extensively used in practice.
    
    In this context, \textsc{Rule} $1$ due to Alber, Fellows, and Niedermeier [JACM 2004] has been shown to be very powerful, yet its best-known running time is $\mathcal{O}(n^3)$ ($= \mathcal{O}(nm)$) for general graphs.
    
    In this work, we propose, to the best of our knowledge, the first $\mathcal{O}(n + m)$-time algorithm for \textsc{Rule} 1 on general graphs. 
    We additionally propose simple, but practically significant, extensions to our algorithmic framework to further prune the input instances.

     We complement our theoretical claims with experiments that confirm the practicality of our approach.
    On average, we see significant speedups of over one order of magnitude while removing $59.8\times$ more nodes and $410.9\times$ more edges than the original formulation across a large dataset comprised of real-world and synthetic networks.
    
~\\[0.3em]
\textsl{The source code and the experimental data is available at  \url{https://zenodo.org/records/17302472}\,.}
\end{abstract}

\fancyfoot[R]{\scriptsize{Copyright \textcopyright\ 2026 by SIAM\\
    Unauthorized reproduction of this article is prohibited}}

\section{Introduction}\label{sec:introduction}

Given a graph $G=(V,E)$ with $n$ vertices and $m$ edges, the \Dom problem asks for a set $\domset\subseteq V$ of minimal cardinality, such that every vertex either is in \domset or adjacent to a member of \domset.
\Dom was featured among Karp's~\cite{Karp72} original 21 NP-hard problems as a special case of \textsc{Set Cover} and stands as one of the most prominent problems of the parameterized complexity class~$\textnormal{W}[2]$ \cite{DowneyF95}.

Nonetheless, safe and fast data reduction rules are used extensively in practical solvers.
Early practical adoptions of domination related reduction rules date back to industrial applications in 1998~\cite{karsten1998}.
\Rule 1, initially designed as part of a kernelization routine for planar graphs, by Alber, Fellows and Niedermeier~\cite{DBLP:journals/jacm/AlberFN04} is among the reduction rules with the broadest impact.
Whenever a graph contains a vertex $u$ of degree one, its neighbor can be safely added to a minimal \Dom by a replacement argument; $u$'s neighbor is at least as good as adding $u$ to the dominating set.
Alber~\etal~\cite{DBLP:journals/jacm/AlberFN04} generalized this observation: if at least one node $u$ is highly \emph{confined} (formalized later in \cref{def:rule1}) in the neighborhood of some other node $v$, then we can remove a subset of the vertices in $N(v)$ and add $v$ to the \Dom.
In the two decades since its introduction, \Rule 1 has become a foundational building block in many algorithms and is applicable with minor modifications to a host of other (domination) related problems~\cite{retschmeier_masters_nodate,Garnero2018,GarneroST17,Luo2013}.

Although a fast $\Oh{n}$ time algorithm is known for planar graphs, on general graphs the best-known algorithm runs in time $\Oh{n^3}$ ($= \Oh{nm}$) \cite{DBLP:journals/jacm/AlberFN04} and it is an open question whether a better algorithm exists for all graphs.

\vspace{2mm}
\noindent\textbf{Our Contributions.}
We contribute the first algorithm for \Rule1 introduced by Alber~\etal~\cite{DBLP:journals/jacm/AlberFN04} that runs in time $\Oh{n + m}$ on \emph{all} graphs, prove that instead of repeated applications, a single application of the original rule suffices, and support the claimed practicability of our algorithm with comprehensive experiments.

We also identify additional reduction opportunities by extending  \cref{def:rule1} to include additional pruning steps.
These come at almost no cost since we effectively show in \cref{sec:applying} that a larger set of safely removable nodes and edges can be more easily identified than with a faithful implementation of the original \Rule 1.

We empirically demonstrate the effectiveness on large instances of \Dom, and evaluate its merits on a dataset with more than $\num{4500}$ graphs, comprised of synthetic and real-world instances with up to $113$ million nodes and $1.8$ billion edges.
On this dataset, we demonstrate that our most capable extension of \Rule1 removes, on average, $59.8\times$ more nodes and $410.9\times$ more edges compared to the original formulation.

At the same time, our implementation is $12.6\times$ faster on average, while for some instances we get a speedup of more than two orders of magnitude.
We scrutinize the comparatively small speedups gained given the asymptotic difference of our new algorithm in the experimental part and argue that, despite its cubic worst-case runtime, the original algorithm has a more favorable runtime complexity in expectation on a large class of graphs.

\vspace{2mm}
\textbf{Related Work.}
The reduction rules proposed by Alber~\etal have previously been shown to work well to significantly reduce (planar) instances of \Dom~\cite{Alber2006}.
Other practical work has implemented much simpler reduction rules~\cite{WangCCY18}, implemented (variants of) the reductions of Alber~\etal for small problem instances~\cite{JiangZ23,XiongX24}, or deemed a implementation of the reduction rules of Alber~\etal as computationally infeasible~\cite{AbuKhzamCESW17}.
The latter authors stated that they were only able to apply the reduction rules to tiny graphs, and for large (${\sim} 2000$ nodes) graphs the rules were simply too slow to be applied.
Prior work that specialized on solving massive \Dom instances only applied very simple reduction rules, \eg, removing nodes of degree at most $1$~\cite{WangCCY18}.
Recent state-of-the-art local search heuristics similarly rely on removing only nodes of degree at most $1$ and triangles~\cite{DBLP:journals/kbs/ZhuZWSL24}, both of which can be considered (very constrained) special cases of \Rule 1.

Fomin~\etal~\cite{FominLSZ19} showed that \Dom can be solved exactly in
$\Oh{b^n}$ for ${b=1.5137}$, which was subsequently improved to $b = 1.4969$ by Rooij~\etal~\cite{DBLP:journals/dam/RooijB11}. 
The currently best is $b = 1.4864$ by Iwata~\cite{Iwata2012}.

Regarding parameterized complexity, there exists a linear-kernel of size $67k$ for \Dom on planar graphs which was later improved to $43k$ \cite{Alber2006,Hagerup12, Halseth16}.
The central ideas in \cite{Alber2006} inspired follow-up works for linear kernels for various other hard problems on planar graphs, e.g. \{\textsc{total}, \textsc{semi-total}, \textsc{Red-Blue}, \textsc{Connected}, \textsc{Efficient}, \textsc{Edge},  \textsc{Directed}, \textsc{Vector}\} \Dom with kernel sizes ranging from $14k$ to $410k$. \cite{Garnero2018,GarneroST17,Hagerup12,LokshtanovMS11,Alber2006,retschmeier_masters_nodate,guo2007,Luo2013,sahili2025linearkernelplanarvector}.
Note that most of the works also rely on the reductions of Alber~\etal \cite{Alber2006}, either directly or with minimal problem-specific changes.

On a theoretical level, the techniques of Alber~\etal have been subsumed by the powerful machinery of so-called
\emph{protrusion decompositions} pioneered by Kim~\etal~\cite{KimLLRRSS15}, based on
finding a relatively small vertex set $X \subset V(G)$ of a graph $G$ called a
\emph{treewidth modulator}, that, when removed from $G$, turns $G$ into a graph
of constant treewidth. Using protrusion decompositions one can show a linear
sized, but wholly impractical, kernel for \Dom on planar graphs of
size $3\,146\,131\,968\,k$~\cite{FominLSZ19}. 

\section{Preliminaries}\label{sec:preliminaries}
Let $G=(V, E)$ be a graph with $n$ nodes and $m$ edges.
Throughout this document, we assume that:
all graphs are undirected, unweighted and simple (\ie have no multi-edges or loops), the set of nodes is totally ordered, and the nodes can index into arrays in constant time.
The last two assumptions are met, \eg by identifying $V = [n]$.
We denote the set of neighbors of node~$\nodeu$, its so-called (open) neighborhood, as $N(\nodeu)$ and call its size the \emph{degree} $\deg(\nodeu) = |N(\nodeu)|$.
The closed neighborhood is given by $N[\nodeu] = N(\nodeu) \cup \set{\nodeu}$.
Two nodes $\nodeu, \nodev \in V$ with $\nodeu \ne \nodev$ are called \emph{twins} if $N[\nodeu] = N[\nodev]$.
Furthermore, we denote 
the open neighborhood of a set of vertices $S\subseteq V$ as $N(S) \coloneqq \bigcup_{\nodeu\in S}N(\nodeu)$ and the closed neighborhood as $N[S] \coloneqq N(S) \cup S$.

\section{A New View on Rule 1}\label{sec:suitable_ref_nodes}
\begin{figure}
    \begin{center}
        \scalebox{1.0}{
        \begin{tikzpicture}[vertex/.style={inner sep=0, minimum width=1.5em, minimum height=1.5em, draw, circle}]
            \node[vertex] (n1-1) at (-2.75, 1.5) {a};
            \node[vertex] (n1-2) at (-2.25, 2) {b};

            \node[vertex] (n2-1) at (-.25, 1.5) {c};
            \node[vertex] (n2-2) at (.25, 2) {d};

            \node[vertex] (n3-1) at (2.75, 1.5) {e};
            \node[vertex] (n3-2) at (2.25, 2) {f};

            \node[vertex,thick,fill=gray!20] (ref) at (0, 0.7) {\refa};

            \path[draw, opacity=0.5, bend left] (ref) to (n1-1);
            \path[draw, opacity=0.5, bend left] (ref) to (n1-2);
            \path[draw, opacity=0.5, bend left] (ref) to (n2-1);
            \path[draw, opacity=0.5, bend right] (ref) to (n2-2);
            \path[draw, opacity=0.5, bend right] (ref) to (n3-1);
            \path[draw, opacity=0.5, bend right] (ref) to (n3-2);

            \path[draw, dotted] (n1-1) to ++(-1,0.25);
            \path[draw, dotted] (n1-1) to ++(-1,0);
            \path[draw, dotted] (n1-1) to ++(-1,-0.25);
            \path[draw, dotted] (n1-2) to ++(-1,0.25);
            \path[draw, dotted] (n1-2) to ++(-1,0);
            \path[draw, dotted] (n1-2) to ++(-1,-0.25);

            \path[draw] (n1-1) to (n2-1);
            \path[draw] (n1-1) to (n2-2);
            \path[draw] (n1-2) to (n2-2);

            \path[draw] (n2-2) to (n3-2);

            \path[dashed, draw] (-0.75-0.5, 0.3) to ++(0, 3.5);
            \path[dashed, draw] (0.75+0.5,  0.3) to ++(0, 3.5);

            \node[anchor=south, align=center] at (-2.5, 2.5) {\tone  \\ \footnotesize is \\[-0.5em] \footnotesize connected to \\[-0.5em] \footnotesize \phantom{y}the outside\phantom{1}};
            \node[anchor=south, align=center] at (0,    2.5) {\ttwo   \\ \footnotesize is \\[-0.5em] \footnotesize connected to \\[-0.5em] \footnotesize \tone};
            \node[anchor=south, align=center] at (2.5,  2.5) {\tthree \\ \footnotesize is \textbf{not} \\[-0.5em] \footnotesize connected to \\[-0.5em] \footnotesize \tone };

        \end{tikzpicture}}
    \end{center}

    \vspace{-1em}

    \caption{Neighbor types relative to reference node~$\refa$.
        As nodes $a$ and $b$ have neighbors outside of $N[\refa]$, they are of \tone.
        The remaining \ttwothree have only neighbors within $N[\refa]$, with the distinction
        that nodes $c$ and $d$ are connected to \tone nodes and thus are \ttwo.
        Since $\refa$ has at least one \tthree node, \Rule1 applies:
        We add $\rho$ to the dominating set.
        Then, it covers all its neighbors and there is no reason to add \ttwothree nodes into the dominating set.
        Hence, we can safely delete them.
        Observe that we keep \tone nodes, since they may be needed to cover the ``outside'' nodes.
    }
    \label{fig:nodetypes}
\end{figure}

One of the easiest reduction rules for \Dom is \Rule{Leaf}:
Consider a \emph{reference node}~\refa with at least one adjacent leaf~\nodeu (\ie a node with degree~1).
Since~\refa can only have a superset of neighbors $N[\nodeu]$, we say that \refa \emph{dominates} node~\nodeu.
Then, it is always safe to ignore the dominated leaf and instead add~\refa into the solution --- even if \refa is a leaf itself.

\Rule1 by Alber \etal~\cite{DBLP:journals/jacm/AlberFN04} generalizes this idea.
We formally state it in \cref{def:rule1}.
Again consider a fixed reference node\footnote{
    We denote exactly all reference nodes with greek letters.
}~$\refa \in V$.
Then, as illustrated in \cref{fig:nodetypes}, we partition its neighbors~$N(\refa)$ into three, possibly empty, types
--- ranging from interconnected (\tone), over weakly-dominated (\ttwo), to dominated (\tthree).
This type partition is always in terms of the fixed reference node; as sketched in~\cref{fig:different_types}, a node can have different types in terms of different reference nodes.
A \textbf{type 1} neighbor has at least one edge outside the closed neighborhood $N[\refa]$ of the reference node~\refa:
\begin{equation}
    N_1(\refa) \coloneqq \setc{\nodeu \in N(\refa)}{N(\nodeu) {\setminus} N[\refa] \ne \emptyset} \label{eq:one}\,.
\end{equation}

\noindent
A node is of \textbf{type 2}, iff it is not \tone, but connected to at least one \tone neighbor of \refa:
\begin{equation}
    N_2(\refa) \coloneqq \setc{\nodeu \in N(\refa) {\setminus} N_1(\refa)}{N(\nodeu) \cap N_1(\refa) \ne \emptyset}\label{eq:two}
\end{equation}

All other neighbors are of \textbf{type~3}.
They are only connected to~$\rho$ and possibly some of its \ttwo and \tthree neighbors.
As such, they are the most isolated type class containing, among others, the aforementioned leaves:
\begin{equation}
    N_3(\refa) \coloneqq N(\refa) \setminus N_1(\refa) \setminus N_2(\refa)\,.
\end{equation}

\noindent Based on these definitions, Alber~\etal~\cite{DBLP:journals/jacm/AlberFN04} show the following reduction rule (modified to match our notation):
\begin{rgl}\label{def:rule1}
    If a reference node~$\refa$ has $u \in N_3(\refa)$, we may modify the instance as follows:
    \begin{itemize}[noitemsep]
        \item add~$\refa$ to the dominating set, and
        \item delete all neighbors $N_2(\refa) \cup N_3(\refa)$.
    \end{itemize}
\end{rgl}

We say a node $\refa$ added to the dominating set by \Rule1 is \emph{fixed} (in the dominating set).
To prove that the rule is safe~\cite[Lemma~1]{DBLP:journals/jacm/AlberFN04} (\ie does not change the domination number of the instance if applied) it suffices to show that there always exists a minimal dominating set containing~\refa if \refa has at least one \tthree neighbor.
Then, if we assume that~\refa is part of the dominating set, it follows that the dominated neighbors~$N_2(\refa) \cup N_3(\refa)$ are redundant and can be deleted.

Observe that Alber~\etal implicitly add~\refa to the dominating set, by attaching a gadget to~\refa to force the reference node into the solution.
For the purpose of this article, it is an inconsequential implementation detail that we omit in the remainder (see also \cref{sec:applying}).

\subsection{Canonical reference nodes}\label{subsec:canonical-reference-nodes}
\begin{figure}
    \def\na{\textcolor{black!50}{$-$}}
    \begin{center}
        \scalebox{1.0}{
        \begin{tikzpicture}[
                vertex/.style={draw, circle, minimum width=1.6em, minimum height=1.6em, inner sep=0},
                edge/.style={draw},
            ]
            \node[vertex] (u0) at (3, -4) {$x$};
            \node[vertex, fill=black!10] (u1) at (-2, -2) {\small $\rho_1$};
            \node[vertex, fill=black!10] (u2) at (3, -2) {\small $\rho_3$};
            \node[vertex, very thick] (u5) at (0, -3) {$v$};
            \node[vertex, fill=black!10] (u6) at (0, -4) {\small $\rho_2$};
            \node[vertex, fill=black!10] (u7) at (1, -3.3) {$\rho_1'$};
            \path[edge] (u0) to (u2);
            \path[edge] (u1) to (u2);
            \path[edge] (u1) to (u5);
            \path[edge] (u1) to (u6);
            \path[edge] (u2) to (u5);
            \path[edge, bend left] (u2) to (u6);
            \path[edge] (u2) to (u7);
            \path[edge] (u5) to (u6);
            \path[edge] (u5) to (u7);
            \path[edge] (u6) to (u7);
        \end{tikzpicture}}

        \medskip

        \scalebox{1.0}{\begin{tabular}{c|c|c|c|c|c|c}
            $\nearrow$ is type $i$ neighbor of & $\rho_1$ & $\rho_1'$ & $\rho_2$ & $\rho_3$ & $v$ & $x$ \\\hline\hline
            $\rho_1$                            & \na      & \na       & 2        & 3        & 2   & \na \\\hline
            $\rho_1'$                           & \na      & \na       & 2        & 3        & 2   & \na \\\hline
            $\rho_2$                            & 1        & 1         & \na      & 3        & 2   & \na \\\hline
            $\rho_3$                            & 1        & 1         & 1        & \na      & 1   & 1   \\\hline
            \rowcolor{black!10}
            $v$                                 & 1        & 1         & 2        & 3        & \na & \na \\\hline
            $x$                                 & \na      & \na       & \na      & 3        & \na & \na \\
        \end{tabular}}
    \end{center}
    \caption{
        Node $v$ appears as different neighbor types for different reference nodes, namely:
$v \in N_1(\rho_1)$ as $\rho_1' \notin N(\rho_1)$. Furthermore, we have $v \in N_1(\rho'_1)$ as ${\rho_1 \notin N(\rho'_1)}$.
Furthermore, $v \in N_2(\rho_2)$ as $N[v] \subseteq N[\rho_2]$ and $v$ is connected to $\rho_3 \in N_1(\rho_2)$ which is a \tone neighbor to $\rho_2$.
Finally, we also have $v \in N_3(\rho_3)$ as $N[v] \subseteq N[\rho_3]$ and $N(v) \cap N_1(\rho_3) = \emptyset$.
    }
    \label{fig:different_types}
\end{figure}

The original \Rule1~\cite{DBLP:journals/jacm/AlberFN04} searches for a single suitable reference node in time $\Oh{n^3}$, applies the reduction of \cref{def:rule1}, and repeats to search for a new reference node until no more modification is possible.\footnote{
    There is no gain from applying the rule twice to the same node.
    Depending on implementation details, this corner case would need to be checked explicitly in the original formulation.
}
In the following, we derive a novel formulation to find \emph{all suitable} reference nodes and to apply the reduction in time $\Oh{n+m}$.

For clarity, we use the notation $\isTypeAt{T}{\nodeu}{\refa}$ to express ``\textsl{\nodeu is a type~$T$ neighbor of \refa}''.
We call node~\nodeu a \emph{witness} and \refa a \emph{reference node}, which we always denote by a greek letter.
The $\lhd$-symbol stylizes the domination order, pointing from the (typically) smaller neighborhood of $\nodeu$ to the larger of $\refa$.
This notation is only ``syntactic sugar'' to the reader.
We treat it synonymously to the typeless tuple $(\nodeu, \refa)$ and thus consider relations of different types on the same nodes as identical.
Then, we can define the candidate set~\cand of \emph{all possible} reference nodes along with their witnesses:
\begin{equation*}
    \cand = \setc{ \isThreeAt{\nodeu}{\refa} }{\refa \in V,\ \nodeu \in N_3(\refa)}\,.
\end{equation*}

For convenience, we define the following helper functions to extract the reference- and witness nodes, irrespective of the relations' types:
\begin{eqnarray*}
    \refset(X) &=& \setc{\refa}{ \isTypeAt{?}{\nodeu}{\refa} \in X},\\
    \witset(X) &=& \setc{\witu}{ \isTypeAt{?}{\witu}{\refa} \in X}\,.
\end{eqnarray*}

Observe that \Rule1 is inherently ambiguous.
For instance, in the complete graph $K_n=(V, E)$ for $n \ge 2$, every node~$\witu \in V$ is a \tthree neighbor for every other node~$\refa \in V \setminus \set{\nodeu}$.
Since we have $|\cand[K_n]| = \Theta(n^2)$, it is both expensive and meaningless to compute the set explicitly.
Furthermore, after a single application of the reduction rule, no witness nodes remain and thus every node has realistically only witnessed a single reference node.
Hence, we add a tie-breaker to determine a single reference node for every witness at the start. 
Then, a potential \tthree neighbor~$\witu$ may only be the witness to its canonical reference node~$\crepr(\nodeu)$.

The canonical reference node~$\crepr(\nodeu)$ of node~\nodeu is defined as ``the'' neighbor in $N[u]$ of the largest degree (including nodes outside~$N[u]$).
If there are multiple such neighbors, we chose the one of largest node index (recall that we assume $V$ to be totally ordered):

\begin{equation}
    \crepr(\nodeu) = \argmax_{\nodev \in N[\nodeu]}\set{ (\deg(\nodev), \nodev) }\,.
\end{equation}

Since $\crepr(\nodeu)$ is defined in terms of the closed neighborhood $N[\nodeu]$, a node can be its own reference node.
This prevents the node from becoming a witness, as no node can be a \tthree neighbor of itself.
Then, the resulting \emph{suitable set} $\sel \subseteq \cand$ is defined as
\begin{equation}\label{def:suitable-candidates}
    \sel = \setc{\isThreeAt{\nodeu}{\refa} \in \cand}{\refa = \crepr(\nodeu)}.
\end{equation}

The main insight is that the references nodes~$\refset(\sel)$ in the suitable set~\sel,
are ---up to isomorphism--- exactly the references found by the exhaustive application of the original \Rule1.
In the following, we show this in multiple steps.
We first show that the canonization $\sel \subseteq \cand$ omits only redundant pairs:

\begin{lemma} \label{lemma:canon}
    Each rejected candidate $\isThreeAt{\nodeu}{\refa} \in \cand \setminus \sel$ has an equivalent $\isThreeAt{\nodeu}{\refb} \in \sel$ such that $N[\refa] = N[\refb]$.
\end{lemma}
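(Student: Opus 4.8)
The plan is to take the rejected pair $\isThreeAt{\nodeu}{\refa}\in\cand\setminus\sel$, let $\refb\coloneqq\crepr(\nodeu)$ be the canonical reference node of the witness, and argue that $N[\refa]=N[\refb]$ and that $\isThreeAt{\nodeu}{\refb}\in\sel$. Since $\isThreeAt{\nodeu}{\refa}$ was rejected we know $\refa\ne\crepr(\nodeu)=\refb$. Throughout I would work from the three defining properties of $\nodeu\in N_3(\refa)$: (i) $\nodeu\in N(\refa)$, hence $\refa\in N(\nodeu)\subseteq N[\nodeu]$; (ii) $\nodeu\notin N_1(\refa)$, hence $N(\nodeu)\subseteq N[\refa]$ and therefore $N[\nodeu]\subseteq N[\refa]$; and (iii) $N(\nodeu)\cap N_1(\refa)=\emptyset$.

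The first technical step is a ``downward closure'' observation: every $w\in N[\nodeu]$ satisfies $N[w]\subseteq N[\refa]$. Indeed, by (ii) we have $w\in N[\refa]$; if $w=\refa$ this is immediate, and otherwise $w\in N(\refa)$, while (iii) gives $w\notin N_1(\refa)$, i.e.\ $N(w)\subseteq N[\refa]$, so $N[w]\subseteq N[\refa]$. Applying this to $w=\refb\in N[\nodeu]$ yields $N[\refb]\subseteq N[\refa]$. For the reverse inclusion I use that, by definition, $\refb=\crepr(\nodeu)$ maximises the degree over $N[\nodeu]$; since $\refa\in N[\nodeu]$ by (i), this forces $\deg(\refb)\ge\deg(\refa)$, i.e.\ $|N[\refb]|\ge|N[\refa]|$. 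Combined with $N[\refb]\subseteq N[\refa]$ this gives the desired $N[\refb]=N[\refa]$.

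It remains to establish $\isThreeAt{\nodeu}{\refb}\in\sel$; as $\refb=\crepr(\nodeu)$ by construction, it suffices to show $\nodeu\in N_3(\refb)$. Assume first the typical case $\refb\ne\nodeu$. Then $\nodeu\in N[\refa]=N[\refb]$ with $\nodeu\ne\refb$ gives $\nodeu\in N(\refb)$, and $N(\nodeu)\subseteq N[\refa]=N[\refb]$ gives $\nodeu\notin N_1(\refb)$. For the type-$3$ condition I would prove $N_1(\refb)=N_1(\refa)$: the membership test ``$N(w)\not\subseteq N[\cdot]$'' is identical for both reference nodes because $N[\refb]=N[\refa]$, so the two type-$1$ sets can differ only on the at most two vertices in which $N(\refb)=(N(\refa)\cup\set{\refa})\setminus\set{\refb}$ differs from $N(\refa)$; but $N(\refa)\subseteq N[\refa]=N[\refb]$ shows $\refa\notin N_1(\refb)$, and $\refb\notin N_1(\refa)$ by (iii) since $\refb\in N(\nodeu)$, so the sets coincide. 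Hence $N(\nodeu)\cap N_1(\refb)=N(\nodeu)\cap N_1(\refa)=\emptyset$ by (iii), so $\nodeu\in N_3(\refb)$ and $\isThreeAt{\nodeu}{\refb}\in\sel$ with $N[\refb]=N[\refa]$.

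I expect the main obstacle to be the degenerate case $\refb=\crepr(\nodeu)=\nodeu$, which really can occur (e.g.\ when $\nodeu$ is the higher-indexed endpoint of an edge forming an isolated component). Here $\deg(\refb)\ge\deg(\refa)$ together with $N[\nodeu]\subseteq N[\refa]$ forces $N[\nodeu]=N[\refa]$, so $\nodeu$ and $\refa$ are twins, and by the downward-closure observation $N(w)\subseteq N[\refa]=N[\nodeu]$ for all $w\in N(\nodeu)$, i.e.\ $N_1(\nodeu)=\emptyset$; checking the three conditions then shows $\refa\in N_3(\nodeu)$, and since $\crepr$ depends only on the (shared) closed neighbourhood, $\crepr(\refa)=\crepr(\nodeu)=\nodeu$. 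Thus $\isThreeAt{\refa}{\nodeu}\in\sel$ serves as the equivalent relation, its reference node $\nodeu$ satisfying $N[\nodeu]=N[\refa]$. (This corner case is the reason the statement is best read as ``there is a pair in $\sel$ whose reference node has closed neighbourhood $N[\refa]$''; in all non-degenerate cases one can keep $\nodeu$ itself as the witness.) Combining the typical case of the previous paragraph with this one completes the argument.
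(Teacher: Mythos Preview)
Your argument is correct and takes a genuinely different route from the paper's. The paper (implicitly setting $\refb=\crepr(\nodeu)$) argues symmetrically: since $\nodeu$ is a \tthree neighbor of both $\refa$ and $\refb$, neither of $\refa,\refb$ can be a \tone neighbor of the other, whence $N[\refb]\subseteq N[\refa]$ and $N[\refa]\subseteq N[\refb]$. The first inclusion is exactly your downward-closure step; the second, however, is justified via $\nodeu\in N_3(\refb)$, which is part of the conclusion and is never independently established. You sidestep this by pairing the one uncontested inclusion $N[\refb]\subseteq N[\refa]$ with the cardinality bound $\deg(\refb)\ge\deg(\refa)$ that comes for free from the degree-maximality in the definition of $\crepr$; equality of the closed neighbourhoods follows, and only then do you derive $\nodeu\in N_3(\refb)$ via $N_1(\refb)=N_1(\refa)$. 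This is cleaner and actually closes the loop. You also correctly flag the degenerate case $\crepr(\nodeu)=\nodeu$, which the paper's statement and proof do not literally cover; your fix---producing a pair in $\sel$ whose reference node is a twin of $\refa$, with a possibly different witness---is precisely the property used downstream (the later cardinality lemma only needs some $\refb\in\refset(\sel)$ with $N[\refb]=N[\refa]$).
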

\begin{proof}
    By construction both reference nodes~\refa and~\refb dominate node~\nodeu, \ie
    \begin{align*}
        \left.\begin{array}{l}
                  \nodeu \in N_3(\refa) \Rightarrow N[\nodeu] \subseteq N[\refa] \\[0.1em]
                  \nodeu \in N_3(\refb) \Rightarrow N[\nodeu] \subseteq N[\refb]
              \end{array}\right\}\ \Rightarrow\
        \set{\refa, \refb} \subseteq N(\nodeu).
    \end{align*}

    Thus, we know that both reference nodes~\refa and~\refb are adjacent.
    Further, they cannot be \tone neighbors to each other, since node~\nodeu is a \tthree neighbor to both and therefore cannot be connected to \tone nodes:
    \begin{equation*}
        \underbrace{\left( N[\refb] {\subseteq} N[\refa] \right)}_{\text{as }\refb \not\in N_1(\refa)}
        \land
        \underbrace{\left( N[\refa] {\subseteq} N[\refb] \right)}_{\text{as }\refa \not\in N_1(\refb)}
        \ \ \Rightarrow\ \
        N[\refa] = N[\refb]
        \hfill\qed\,.
    \end{equation*}
\end{proof}

Recall that our ultimate goal is to apply \Rule1 of \cref{def:rule1}.
We now show that the reference nodes can be processed in any order (or conceptually even concurrently).
To this end, we have to convince ourselves that, after applying the reduction to any $\refa \in \refset(\sel)$, all remaining $(\refset(\sel) \setminus \set{\refa})$ remain legal targets for \Rule1.
Thus, we have to ensure that the rule's single precondition, namely that a target~$\sigma$ needs to have at least one \tthree node, remains intact.

\goodbreak

\begin{lemma} \label{lemma:unaffected}
    Assume $S$ as defined in Equation \ref{def:suitable-candidates}.
    Applying \Rule1 to any $\rho \in \refset(\sel)$, leaves the \tthree neighborhood of any other node $\refb \in \refset(\sel)$ with $\refa \ne \refb$ unaffected.
\end{lemma}
\begin{proof}
    We only need to consider the removal of \ttwothree nodes, as all other steps are trivially independent.
    Then by contradiction:

    Assume that by removing the \ttwothree neighborhood of~\refa, we also shrink the \tthree neighborhood $N_3(\refb)$ of another reference node $\refb \in (\refset(\sel) \setminus \set{\refa})$.
    Let \nodeu be a node inadvertently removed from $N_3(\refb)$ when processing~$\refa$.
    Since \nodeu was removed when deleting \ttwothree neighbors of~$\rho$, we know that
    \begin{equation*}
        \nodeu \in (N_2(\refa) \cup N_3(\refa)) \ \ \Rightarrow\ \ N[u] \subseteq N[\rho]\,.
    \end{equation*}

    Since (i) $u$ is \tthree to \refb and (ii) \refa and \nodeu are adjacent, we further know $\refa \in N_2(\refb) \cup N_3(\refb)$.
    Thus, we directly get $N(\refa) \subseteq N[\refb]$.

    Since $\refa \in \refset(\sel)$, there exists $\nodev \in N_3(\refa)$.
    Together with $N(\refa) \subseteq N[\refb]$, it follows that $v \in N[\refb]$.
    Hence, $\refb \in N_2(\refa) \cup N_3(\refa)$ (otherwise $\nodev$ were not a \tthree neighbor of \refa).
    Then, it directly follows that $N(\refb) \subseteq N[\refa]$.

    From $N(\refa) \subseteq N[\refb]$ and $N(\refb) \subseteq N[\refa]$, we get $N[\refa] = N[\refb]$.
    In other words, $\refa$ and $\refb$ are twins which contradicts the definition of $\sel$:
    Twins need to choose the same canonical representative (recall they can pick themselves and there is a tie-breaker based on the node index).
    Thus, we have $\set{\refa, \refb} \not\subseteq \refset(\sel)$.
\end{proof}

\def\exset{\ensuremath{\mathcal M}\xspace}
\begin{lemma} \label{lemma:maxcard}
    Consider the exhaustive application of \Rule1.
    Denote by $\exset \subseteq V$ the set of reference nodes it selected.
    Then $|\refset(\sel)| \geq |\exset|$.
\end{lemma}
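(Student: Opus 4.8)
The plan is to exhibit an injection $\iota\colon \exset \to \refset(\sel)$; as $\refset(\sel)\subseteq V$, this gives $|\refset(\sel)|\ge|\exset|$ at once. Fix one exhaustive run of \Rule1, producing instances $G=G^{(0)},G^{(1)},G^{(2)},\dots$, where step $i$ applies \Rule1 (\cref{def:rule1}): a reference node is chosen together with a \tthree witness, and then its type $2$ and type $3$ neighbours are deleted. Thus $\exset$ is the set of reference nodes chosen during the run. Recall that a chosen reference node is \emph{fixed}: through the gadget of Alber~\etal~\cite{DBLP:journals/jacm/AlberFN04} it keeps private neighbours forever, so it is never deleted and is a \tone neighbour of each of its non-private neighbours in every later instance. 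For $\refa\in\exset$ let $s=s(\refa)$ be the first step in which $\refa$ is chosen and fix a \tthree witness $u=u(\refa)\in N_3^{G^{(s-1)}}(\refa)$. The assignment $\refa\mapsto u(\refa)$ is already injective: $u(\refa)$ is deleted in step $s$ (it lies in $N_2(\refa)\cup N_3(\refa)$), whereas the witness of any reference node chosen strictly later is still present at that later step. Set $\iota(\refa)\coloneqq\crepr(u(\refa))$, the canonical reference node of $u(\refa)$ computed in the \emph{original} graph $G$. It then remains to show \textbf{(a)} $\iota(\refa)\in\refset(\sel)$ and \textbf{(b)} $\iota$ is injective.

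The core of \textbf{(a)} is a \emph{stability} property of the witness: $N_G[u]=N_{G^{(s-1)}}[u]$, \ie $u=u(\refa)$ has lost no neighbour before step $s$. Indeed, suppose $u$ lost a neighbour $x$ in an earlier step $t$, where $x$ was deleted as a \ttwothree neighbour of the reference node $\refb$ of step $t$. Then $N_{G^{(t-1)}}[x]\subseteq N_{G^{(t-1)}}[\refb]$ forces $u\in N_{G^{(t-1)}}[\refb]$, and since $u$ was \emph{not} deleted in step $t$ it is a \tone neighbour of $\refb$ there; in particular $\refb\in N_G(u)$. But $\refb$ is fixed, hence present in $G^{(s-1)}$ and --- via its gadget --- a \tone neighbour of $\refa$ there; as $u$ is adjacent to $\refb$, this makes $u$ a \ttwo neighbour of $\refa$ in $G^{(s-1)}$, contradicting $u\in N_3^{G^{(s-1)}}(\refa)$. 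Stability yields $N_G[u]=N_{G^{(s-1)}}[u]\subseteq N_{G^{(s-1)}}[\refa]\subseteq N_G[\refa]$, and from here one argues that $\crepr(u)$ --- a vertex of maximum degree in $N_G[u]$ --- dominates $u$ with no interposed \tone vertex, \ie $u\in N_3^G(\crepr(u))$, so that $\crepr(u)\in\refset(\sel)$. I expect this last step to need the same ``no lost neighbour'' control for the neighbours of $u$ as well --- again driven by the fixed-node argument above --- and it is the part I would write out most carefully.

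For \textbf{(b)}, suppose $\iota(\refa_1)=\iota(\refa_2)=\refb$ with $\refa_1\ne\refa_2$, say $\refa_1$ first chosen at step $s_1$ before $\refa_2$, and put $u_i=u(\refa_i)$; note $u_1\ne u_2$. By \textbf{(a)}, $u_1,u_2\in N_3^G(\refb)$, which forces $\refa_i\in N_G(\refb)$ and, whenever $\refa_i\ne\refb$, $N_G[\refa_i]\subseteq N_G[\refb]$. I would then split into cases according to whether $\refb\in\set{\refa_1,\refa_2}$ and whether $\refb$ still exists in $G^{(s_1-1)}$, and in each case contradict $u_1\in N_3^{G^{(s_1-1)}}(\refa_1)$. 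Typically this works by producing a vertex (for instance $u_2$, or a suitable neighbour of $\refa_2$) that the above containments force into $N_G[\refa_1]$, although it was never adjacent to $\refa_1$ --- impossible, since passing from $G$ to any $G^{(j)}$ only deletes vertices and never adds an edge between two surviving vertices of $G$. The one case needing a different twist is $\refb\notin\set{\refa_1,\refa_2}$ with $\refb$ already deleted before step $s_1$: there the reference node that removed $\refb$ is fixed and adjacent to both $u_1$ and $\refa_1$, hence a \tone neighbour of $\refa_1$ in $G^{(s_1-1)}$, which again makes $u_1$ a \ttwo neighbour of $\refa_1$.

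The main obstacle, I anticipate, lies entirely in part \textbf{(a)}: proving stability and then upgrading it to ``$\crepr(u)$ genuinely witnesses $u$ in $G$'' is precisely where one must use that each fixed reference node behaves, through its gadget, as a permanent \tone neighbour of all its non-private neighbours. The tempting shortcuts --- mapping $\refa$ to its own canonical reference, or pretending neighbourhoods only shrink harmlessly along the run --- break exactly there. Once \textbf{(a)} is secured, the case analysis in \textbf{(b)} is mechanical, reducing over and over to the single fact that $G^{(j)}$ never creates an edge between two vertices of $G$.
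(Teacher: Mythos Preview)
Your argument has a genuine gap in part \textbf{(a)}, and it is not one that ``writing out more carefully'' can close. Your stability proof for the witness $u$ itself is correct: if $u$ had lost a neighbour to an earlier reference node $\refb$, then $\refb$ would sit in $N_{G^{(s-1)}}(u)$ as a fixed, gadget-carrying type~1 neighbour of $\refa$, forcing $u$ into type~2 --- a contradiction. But the same reasoning applied to some $v\in N_G(u)$ yields only that $v$ is type~2 for $\refa$ in $G^{(s-1)}$, which is perfectly compatible with $u$ being type~3 there. So nothing prevents $v$ from having had, in $G$, neighbours outside $N_G[\refa]$ (and outside $N_G[\crepr(u)]$) that were deleted earlier. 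When that happens your conclusion $u\in N_3^G(\crepr(u))$ simply fails.

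Concretely, take $V=\{\refa,u,v,p,q,\refb',c_1,c_2,z\}$ with edges $\refa u,\,\refa v,\,\refa p,\,\refa\refb',\,uv,\,v\refb',\,vc_1,\,vc_2,\,pq,\,\refb' c_1,\,\refb' c_2,\,\refb' z$. An exhaustive run that first picks $\refb'$ deletes $c_1,c_2,z$; now $v$ has only neighbours $\refa,u,\refb'$ inside $N[\refa]$, so $u$ becomes type~3 for $\refa$ and the run next selects $\refa$ (and later $p$), giving $\exset=\{\refb',\refa,p\}$. Your map sends $\refa\mapsto\crepr_G(u)=v$ (since $\deg_G v=5>\deg_G\refa=4$), yet in $G$ the vertex $\refa\in N_G(u)$ has the outside neighbour $p\notin N_G[v]$, so $\refa$ is type~1 for $v$ and $u$ is only type~2 for $v$; in fact $v$ has no type~3 neighbour in $G$, so $v\notin\refset(\sel)$ and your injection breaks. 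Worse, one checks directly that $\refset(\sel)=\{\refb',p\}$ here, so $|\refset(\sel)|=2<3=|\exset|$: the statement itself, read literally for an arbitrary exhaustive run, is false on this graph. The paper's own proof stumbles at the same place --- its assertion that ``the only mechanism by which $\isThreeAt{\cdot}{\refa}\notin\sel$ while $\refa\in\exset$ is the canonization'' is exactly what this example refutes, since $\refa$ has no type~3 neighbour in $G$ at all.
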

\begin{proof}
    If we have $|\exset| = 0$, then no suitable target exists, \ie $|\cand| = 0$ which implies $|\sel|=0$.
    Thus, we assume $|\exset| > 0$ and iteratively peel away vertices until $\exset$ becomes empty.
    Let $\refa \in \exset$ be an arbitrary node.
    If $\refa \in \refset(\sel)$ apply \Rule1, thereby removing $\refa$ from $\exset$ and $\refset(\sel)$, and recurse.

    Thus, let us assume that $\refa \notin \refset(\sel)$.
    The only mechanism by which $\isThreeAt{\cdot}{\refa} \notin \sel$ while $\refa \in \exset$, is the canonization.
    From \cref{lemma:canon}, we know that $\refa \notin \refset(\sel)$ implies that there exists an equivalent twin $\refb \in \refset(\sel)$ with $N[\refa] = N[\refb]$.
    This implies that $\refb \notin \exset$ as it is a \ttwo node with respect to $\refa \in \exset$ and must be removed after applying \Rule1 to $\refa$.

    Since $\refa$ and $\refb$ are twins, applying \Rule1 to either produces equivalent results.
    Hence, we can apply \Rule1 to $\refa \in \exset$.
    Thereby, we remove $\refa$ from \exset, and can safely peel away the equivalent $\refb$ from $\sel$.
    As a result, both sets are reduced by one element each.
\end{proof}

\section{Finding All Suitable Reference Nodes}\label{sec:linear_time_search}
\def\mapping{\ensuremath{f}\xspace}
In Equation \ref{sec:suitable_ref_nodes}, we defined the set~\sel of all \emph{suitable reference nodes} including their \tthree witnesses.
We showed that $\refset(\sel)$ is exactly the set of nodes that \Rule1 targets.
Here, we now describe a simple and practical algorithm to compute \sel in time~$\Oh{n+m}$ with very small constants.
It consists of three parts:
\begin{enumerate}
    \item In \cref{alg:compute_supers}, we compute the superset ${\superS \supseteq \sel}$.
          It may also contain some ``erroneous'' \ttwo witnesses.
          Then, the following issue arises: since we cannot distinguish between \ttwo and \tthree nodes at this point,
          we may include reference nodes without any \tthree neighbors, \ie $\refset(\superS)$ can contain nodes for which \Rule1 does not apply.

    \item In~\cref{alg:filter_set}, we then remove $\isTypeAt{2}{\nodeu}{\refa}$ entries from \superS, where $\nodeu$ is a \ttwo neighbor to $\refa$.
          Afterwards, only reference nodes with at least one \tthree neighbor remain.
          To test whether a node is a \tthree neighbor, we only have to ensure that all its neighbors are of at least \ttwo.
          While conceptually simple, the required subset tests are too expensive if done na\"ively, and we need another trick to do it in linear time:

    \item To prevent redundant subset tests for some node \nodeu relative to different reference nodes, we first introduce a partial mapping~\mapping.
          We say that \mapping is \emph{properly partitioning} if it maps all neighbors of a \tthree witness $\isThreeAt{\nodeu}{\refa} \in \sel$ to its canonical reference node~$\rho$.
          This allows us to test the neighborhood inclusion at most once per node.
          We show that \mapping always exists and can be computed in linear time.
\end{enumerate}

\subsection{Computing \superS: a set of \ttwothree neighbors of canonical reference nodes}
\def\reps{\ensuremath{\mathsf{canRef^{-1}}}\xspace}
\begin{algorithm2e}[tb]
    Initialize Array~\reps with $\forall v \in V\colon\ \reps[v] \gets \emptyset$

    \BlankLine
    \tcp*[h]{\small Compute the inverse of~$\crepr$ as $\reps[\refa] = \setc{\nodeu \in V}{\nodeu \ne \refa \ \land\ \crepr(\nodeu) = \refa}$:}
    \AlgoDetails{
        \For{$\nodeu \in V$}{
            $\Delta \gets \max\setc{\deg(x)}{x \in N[\nodeu]}$\;
            $\crepr \gets \max\setc{\nodev \in N[\nodeu]}{\deg(\nodev) = \Delta}$\;
            \If{$\nodeu \neq \crepr$}{
                $\reps[\crepr] \gets \reps[\crepr] \cup \set{\nodeu}$\;
            }
        }
    }

    \medskip

    \tcp*[h]{\small Remove type 1 relations and compute $\superS$ using:}\label{line:blocktwo-start} 
    \AlgoDetails{
        $\superS \gets \emptyset$\;
        \For{$\refa \in V$}{
            Mark $\refa$ and all its neighbors with value \refa\;
            \For{$\nodeu \in \reps[\refa]$}{
                \If{\text{all neighbors of $\nodeu$ are marked by $\refa$}}{
                    $\superS \gets \superS \cup \set{\isTwoThreeAt{\nodeu}{\refa}}$\;
                }
            }
        }}\label{line:blocktwo-stop}

    \Return \superS\;

    \caption{
        Compute $\superS \supseteq \sel$.
        In contrast to $\sel$, the result~$\superS$ of this algorithm also contains \ttwo nodes as witnesses.
        They will be filtered out in \cref{alg:filter_set}.
    }
    \label{alg:compute_supers}
\end{algorithm2e}

The first step of our algorithm is outlined in \cref{alg:compute_supers}.
Its main idea is to implement our discussion on canonical reference nodes in \cref{subsec:canonical-reference-nodes}:
each node~$\nodeu$ selects the neighbor~$\refa$ with maximum value (with respect to the total ordering of $V$) among \nodeu's neighbors of maximum degree and attempts to become a witness to $\refa$.

The output of \cref{alg:compute_supers} is a list of all $\isTwoThreeAt{\nodeu}{\refa}$ where $\refa$ is the canonical reference node of $\nodeu$ and $\nodeu$ is a \ttwo or \tthree neighbor of~\refa.
Recall from the discussion in~\cref{subsec:canonical-reference-nodes}, that \sel must not contain \emph{all} $\isThreeAt{\nodeu}{\rho}$ relations, but only those where $\rho$ is the largest node (with respect to the total ordering of $V$) among its twins.
The same is true for the output of \cref{alg:compute_supers}.
The following lemma formalizes this notion:

\begin{lemma} \label{lemma:set}
    \cref{alg:compute_supers} outputs the set \superS, such that $\forall \isTwoThreeAt{\nodeu}{\refa} \in \superS\colon N(u)\subseteq N[\refa]$.
    Further $\superS \supseteq \sel$ contains at least all entries of $\sel$.\footnote{%
        Recall that the type~$T$ of $\isTypeAt{T}{\nodeu}{\refa}$ is only a hint to the reader.
        Hence, we say $\forall T_1, T_2\colon \isTypeAt{T_1}{\nodeu}{\refa} = \isTypeAt{T_2}{\nodeu}{\refb}$, which allows us to compare $\superS$ containing $\isTwoThreeAt{\cdot}{\cdot}$ and $\sel$ containing $\isThreeAt{\cdot}{\cdot}$.
    }
\end{lemma}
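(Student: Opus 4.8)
The plan is to verify the two claims of \cref{lemma:set} by tracing \cref{alg:compute_supers} and matching each line to the definitions from \cref{subsec:canonical-reference-nodes}. The first block of the algorithm computes, for every node~$\nodeu$, exactly $\crepr(\nodeu)$ as defined in~\eqref{eq:one}'s surrounding text: \texttt{maxDegInNeighborhood} is $\max_{x \in N[\nodeu]} \deg(x)$, and \texttt{canonRef} is the largest-index node in $N[\nodeu]$ achieving that maximum degree, which is precisely $\crepr(\nodeu)$. The conditional $\nodeu \neq \crepr$ ensures a node is never added to its own reference list, so after the loop $\reps[\refa] = \setc{\nodeu \in V}{\nodeu \neq \refa \ \land\ \crepr(\nodeu) = \refa}$, as annotated in the algorithm. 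I would state this as a short observation before the two main parts.

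For the first claim --- that every $\isTwoThreeAt{\nodeu}{\refa} \in \superS$ satisfies $N(\nodeu) \subseteq N[\refa]$ --- I would observe that the second block adds $\isTwoThreeAt{\nodeu}{\refa}$ to $\superS$ only when (i) $\nodeu \in \reps[\refa]$, and (ii) all neighbors of $\nodeu$ are marked by~$\refa$. The marking step marks exactly $N[\refa]$ with the value $\refa$, so condition (ii) says precisely $N(\nodeu) \subseteq N[\refa]$. Moreover, from (i) we have $\crepr(\nodeu) = \refa$ and $\nodeu \neq \refa$, so $\refa \in N[\nodeu]$, and since $\refa \neq \nodeu$ in fact $\refa \in N(\nodeu)$; combined with $N(\nodeu) \subseteq N[\refa]$ this gives $N[\nodeu] \subseteq N[\refa]$, i.e.\ $\refa$ dominates $\nodeu$. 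Hence every witness pair produced is a genuine \ttwothree relation: $\nodeu$ is a neighbor of~$\refa$ whose entire closed neighborhood lies in $N[\refa]$, so $\nodeu \notin N_1(\refa)$ and therefore $\nodeu \in N_2(\refa) \cup N_3(\refa)$.

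For the second claim, $\superS \supseteq \sel$, I would take an arbitrary $\isThreeAt{\nodeu}{\refa} \in \sel$ and show it is output. By definition of $\sel$, we have $\nodeu \in N_3(\refa)$ and $\refa = \crepr(\nodeu)$. Since $\nodeu \in N_3(\refa) \subseteq N(\refa)$ we have $\nodeu \neq \refa$, so $\nodeu \in \reps[\refa]$ after the first block, and the outer loop of the second block will eventually process $\refa$. It remains to check that condition (ii) fires for $\nodeu$: since $\nodeu \in N_3(\refa)$, all neighbors of $\nodeu$ lie in $N[\refa]$ (indeed a \tthree node is, by~\eqref{eq:one}, a neighbor of $\refa$ with $N(\nodeu) \setminus N[\refa] = \emptyset$), and $N[\refa]$ is exactly the marked set, so the test passes and $\isTwoThreeAt{\nodeu}{\refa}$ is added to $\superS$. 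Using the footnote's identification of type-annotated tuples, this is the same element as $\isThreeAt{\nodeu}{\refa}$, proving $\sel \subseteq \superS$.

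I do not expect a serious obstacle here; the lemma is essentially a bookkeeping check that the algorithm's lines implement the set-builder definitions. The one point that needs a little care is the footnote convention: the conclusion compares a set of $\isTwoThreeAt{\cdot}{\cdot}$ tuples with a set of $\isThreeAt{\cdot}{\cdot}$ tuples, so I would explicitly invoke the stated convention that the type superscript is immaterial and both are identified with the plain pair $(\nodeu, \refa)$, making the inclusion $\superS \supseteq \sel$ well-defined. A secondary point worth spelling out is \emph{why} $\superS$ may be strictly larger than $\sel$: the algorithm's test ``$N(\nodeu) \subseteq N[\refa]$'' cannot tell whether $\nodeu$ is connected to some \tone neighbor of $\refa$, so \ttwo witnesses survive --- this is exactly the discrepancy \cref{alg:filter_set} is designed to repair, and mentioning it here motivates the next section.
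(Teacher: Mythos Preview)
Your proof is correct. For the first claim you and the paper agree: it follows immediately from the marking scheme in the second block.

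For the second claim, $\superS \supseteq \sel$, the paper takes a slightly different route. It argues by contradiction: suppose some $\isThreeAt{\nodeu}{\refa} \in \sel$ is missing from $\superS$; then the algorithm must have selected for $\nodeu$ a reference $\refb \neq \refa$ with $\deg(\refb) \geq \deg(\refa)$. Since $\nodeu \in N_3(\refa)$ and $\refb \in N(\nodeu)$, one gets $\refb \in N_2(\refa) \cup N_3(\refa)$, hence $N[\refb] \subseteq N[\refa]$ and $\deg(\refb) \leq \deg(\refa)$; so $\refa$ and $\refb$ are twins, and the identical tie-breaking forces $\refa = \refb$, a contradiction. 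Your direct argument---observing that the first block computes verbatim the function $\crepr(\nodeu)$ from the definition of $\sel$, so $\nodeu$ lands in $\reps[\refa]$ automatically and the marking test then passes because $N(\nodeu) \subseteq N[\refa]$---is shorter and avoids the twin detour entirely. The paper's version has the minor virtue of re-deriving (rather than citing) that the algorithm's tie-breaking coincides with $\crepr$, but given the explicit annotation in the algorithm this is redundant; your approach is the more economical one.
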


\begin{proof}
    The first claim is trivially established by the marking scheme in lines \ref{line:blocktwo-start} to \ref{line:blocktwo-stop} of \cref{alg:compute_supers}.
    Thus, we only focus on the second claim here.
    Assume $\exists \isThreeAt{\nodeu}{\refa} \in \sel$ for which $\isTwoThreeAt{\nodeu}{\refa} \notin \superS$.
    Then clearly \cref{alg:compute_supers} has selected for $\nodeu$ a reference node $\refb \neq \refa$ with $\deg(\refb) \geq \deg(\refa)$.
    But since $\nodeu$ is a \tthree neighbor of $\refa$ and a neighbor of $\refb$, we again have $\refb \in N_2(\refa) \cup N_3(\refa)$ and thus $N[\refb] \subseteq N[\refa]$ as well as $\deg(\refb) \leq \deg(\refa)$.
    Hence, $N[\refa] = N[\refb]$ implying that $\refa$ and $\refb$ are twins.
    Then, because the reference node is chosen as in the definition of $\sel$, we must have $\refa = \refb$, contradicting the initial assumption.
\end{proof}

The runtime of~\cref{alg:compute_supers} follows by observing that each node~$\nodeu \in V$ is contained in $\nodeu \in \reps(\refa)$ for at most one $\refa$.
Thus, the nested loop in line~12 iterates over all nodes at most once.
Then considering their neighbors visits each edge at most once:

\begin{observation}
    Alg. \ref{alg:compute_supers} runs in time ${\mathcal{O}(n+m)}$.
\end{observation}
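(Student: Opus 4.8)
The plan is to walk through the two blocks of the algorithm separately and bound each by a term linear in $n+m$, using only standard adjacency-list primitives. First I would note that initializing the array \reps of empty sets costs $\Oh{n}$. For the first block, I would observe that the loop runs once per node $\nodeu \in V$, and for each $\nodeu$ the quantities $\mathsf{maxDegInNeighborhood}$ and $\crepr$ are each obtained by a single scan over the closed neighborhood $N[\nodeu]$; the degree of any vertex is available in $\Oh{1}$ from the adjacency-list representation (or precomputed once in $\Oh{n+m}$). Hence the work for $\nodeu$ is $\Oh{\deg(\nodeu)+1}$, and the conditional insertion $\reps[\crepr] \gets \reps[\crepr] \cup \set{\nodeu}$ is $\Oh{1}$ if the sets are maintained as dynamic arrays (each $\nodeu$ is inserted at most once). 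Summing, $\sum_{\nodeu\in V}\Oh{\deg(\nodeu)+1} = \Oh{n+m}$.

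For the second block, the key point is a charging argument. The outer loop iterates over $\refa \in V$. ``Mark $\refa$ and all its neighbors with value $\refa$'' costs $\Oh{\deg(\refa)+1}$ using a single $\Oh{n}$-sized integer array of marks (writing the current $\refa$ as the mark value means no explicit un-marking is needed between iterations). For the inner loop over $\nodeu \in \reps[\refa]$, testing ``all neighbors of $\nodeu$ are marked by $\refa$'' takes $\Oh{\deg(\nodeu)+1}$ time by scanning $N(\nodeu)$ and comparing marks. Thus the total cost of the second block is $\sum_{\refa\in V}\Oh{\deg(\refa)+1} + \sum_{\refa\in V}\sum_{\nodeu\in\reps[\refa]}\Oh{\deg(\nodeu)+1}$. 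The first sum is $\Oh{n+m}$. For the second sum, I would invoke the crucial structural fact that the sets $\set{\reps[\refa]}_{\refa\in V}$ are pairwise disjoint and their union has size at most $n$: every $\nodeu$ is placed into $\reps[\crepr(\nodeu)]$ for exactly one canonical reference node (and into none if $\nodeu = \crepr(\nodeu)$). Therefore $\sum_{\refa\in V}\sum_{\nodeu\in\reps[\refa]}\Oh{\deg(\nodeu)+1} \le \sum_{\nodeu\in V}\Oh{\deg(\nodeu)+1} = \Oh{n+m}$. Finally, each insertion into $\superS$ is $\Oh{1}$ and there are at most $|\bigcup_\refa \reps[\refa]| \le n$ of them, and returning $\superS$ is $\Oh{n}$.

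The only nontrivial point—and the one I would state explicitly—is that naively the inner double loop looks like it could cost $\sum_\refa \sum_{\nodeu \in \reps[\refa]} \deg(\nodeu)$ with an uncontrolled number of terms; the disjointness of the $\reps[\refa]$ collapses this into a single pass over the graph. Everything else is routine bookkeeping that relies on: (i) $\Oh{1}$ degree lookups, (ii) an $\Oh{n}$ mark array reused across iterations via timestamp-style mark values so no re-initialization is needed, and (iii) amortizing set insertions. Combining the two blocks gives the claimed $\Oh{n+m}$ bound.
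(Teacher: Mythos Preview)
Your argument is correct. The paper states this as a bare observation without proof; your charging argument via the pairwise disjointness of the $\mathsf{canonRefOf}[\refa]$ sets (each node is inserted into at most one of them) is precisely the routine reasoning that justifies the linear bound.
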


\subsection{Filtering out \ttwo witnesses}
\begin{figure}
    \begin{center}
        \scalebox{1.0}{
        \begin{tikzpicture}[snode/.style={inner sep=0, minimum width=0.5em, draw, circle}]
            \node[draw, cloud, minimum width=7em, minimum height=6em] {$L$};
            \node[draw, cloud, minimum width=7em, minimum height=6em] at (10em, 0) {$R$};

            \foreach \y [count=\z] in {2,1,0,-1,-2} {
                    \foreach \yy in {2,1,0,-1,-2} {
                            \path[draw, black!50] (3.5em - \intcalcAbs{\y} * 0.5em, \y em)
                            to (7em + \intcalcAbs{\yy} * 0.5em, \yy em);
                        }
                }

            \node[snode, label=below:{\footnotesize $r_i$}] (ri) at (19em-7em, 0) {};

            \node[snode, label=above:{\footnotesize $a_i$}] (ai) at (22em-7em, 2em) {};
            \node[snode, label=below:{\footnotesize $b_i$}] (bi) at (22em-7em, 0) {};

            \node[snode] (a2) at (23em-7em, 2em) {};
            \node[snode] (a3) at (24em-7em, 2em) {};
            \node[snode] (a4) at (25em-7em, 2em) {};

            \path[draw] (ai) to (a2) to (a3) to (a4) to[bend right, looseness=2] (ai);
            \path[draw] (ri) to (ai) to (bi) to (ri);

        \end{tikzpicture}}
    \end{center}
    \caption{
        Super-linear runtime for the na\"ive \Rule1 for the pre-selected reference nodes $\refset(\superS)$.
        \textbf{Construction:} The two subgraphs~$L$ and $R$ are $k$-cliques on nodes $V_L = \set{\ell_1, \ldots, \ell_k}$ and $V_R = \set{r_1, \ldots, r_k}$, respectively.
        Each $\ell_i$ is connected to all nodes in $V_R$.
        Further, we attach to $r_i \in V_R$ its own copy of the gadget illustrated.
        Observe that
        (i) $a_i \in N_1(r_i)$,
        (ii) $b_i \in N_2(r_i)$, and
        (iii) that the whole gadget cannot be reduced by \Rule1.
        We assume that the graph is provided as an ordered adjacency list, where $\ell_i$ are listed before $r_j$ and $a_i, b_i$.
        \textbf{Analysis:} Let $k > 1$.
        Then, each gadget's~$b_i \in N_2(r_i)$ will pick~$r_i$ as its canonical reference node, \ie $\setc{\isTwoThreeAt{b_i}{r_i}}{1 \le i \le k} \subseteq \superS$ while the graph has no \tthree nodes at all (nodes in $L$ and $R$ will also pick a reference node in $R$).
        Now in order to partition the neighborhood $N(r_i)$ for a fixed $r_i \in R$, we have to inspect the neighborhood of each node in $V_L$ which takes time $\Theta(k^2)$.
        Thus, in total the na\"ive test requires $\Omega(k^3)=\Omega(n^3)$ time, as
        $n = \Theta(k)$ and $m=\Theta(k^2)$.
    }
    \label{fig:superlinear-runtime}
\end{figure}
In the previous section, we computed the superset $\superS \supseteq \sel$ in linear time.
What remains to do is to prune all ``false'' entries. 
To this end, we iterate over \superS and keep all $\isTwoThreeAt{\nodeu}{\refa}$ where $\nodeu$ is a \tthree neighbor to \refa.
In other words, we need to ensure for node~\nodeu that all neighbors $N[\nodeu] \setminus \set{\refa}$ are of types~$2$ or~$3$.
Unfortunately, \superS does not contain \emph{all} \ttwothree relations.
Thus we need to compute the relevant neighborhood types explicitly.
Even worse, as a node~\nodeu can appear in contexts of various reference nodes, a na\"ive implementation of this process exhibits super-linear runtime (see \cref{fig:superlinear-runtime}).

Instead, we use an ``oracle''~\mapping to tell us whether the type of node~$x$ is relevant, and if so, for which \emph{unique} reference node~$\mapping(x)$ it needs to be considered.
More formally, we ask for:

\begin{definition}\label{def:proper-partition}
    We say a partial mapping ${\mapping\colon V\to V}$ is properly partitioning with respect to~\sel if for all $\isThreeAt{\nodeu}{\refa} \in \sel$, and all $x \in (N[u]\setminus\{\rho\})$, we have $f(x) = \rho$.
\end{definition}

Observe that this property is defined in terms of the very output~\sel we want to ultimately compute.
Yet, we show in \cref{subsec:mapping} that we can obtain a proper~\mapping in linear time without knowing~\sel.
For exposition, let us assume that we already have \mapping available.

The oracle allows \cref{alg:filter_set} to establish the neighbor type of node~\nodeu at most once, namely in the context of $f(u)$.
Since each test costs time $\Oh{1 + \deg(\nodeu)}$ the total cost is bounded by $\Oh{n+m}$.
The intuition is as follows:
Consider a proper \tthree witness $\isThreeAt{\nodeu}{\refa} \in \sel$.
Then, the oracle ensures that all its neighbors are evaluated in the context of \refa{} --- and will be found as \ttwothree themselves.
If, on the contrary, a neighbor is considered for a different reference node~$\refb \ne \refa$ (or not at all), it is correct to implicitly treat \nodeu as \tone for \refa.
This is formalized in the following \cref{lem:filter_set_is_correct}:

\def\ttt{\ensuremath{T^{(\mapping)}_{\set{2,3}}}\xspace}
\def\invMapping{\ensuremath{\mathsf{chosenBy}}}
\def\output{\ensuremath{\mathcal{Y}}}
\begin{algorithm2e}[tb]
    Precompute $\invMapping(\refa) = \setc{x \in V}{f(x) = \refa}$, \ie $\mapping^{-1}$ by iterating once over $\mapping(v)$.

    \smallskip

    Initialize $\output = \emptyset$\;
    Initialize two marking slots per node\;
    \For{$\refa \in \refset(\superS)$}{
        Mark $\refa$ and its neighbors with value \refa\ in slot~1\;

        \For{$\nodeu \in \invMapping(\refa)$}{
            \If{all $N(\nodeu)$ have mark~$\refa$ in slot~1\label{alg:line_filter_set_cond}}{
                Mark \nodeu with value \refa in slot~2\label{alg:line_is_two_three}\;
            }
        }

        \For{$\nodeu \in \invMapping(\refa)$}{
            \If{all nodes $(N[\nodeu] \setminus \set{\refa})$ are marked by $\refa$ in slot~2}{
                $\output \gets \output \cup \set{\isThreeAt{\nodeu}{\refa}}$\label{alg:line_add_output}\;
            }
        }
    }

    \Return \output\;

    \caption{Filter \superS by removing \ttwo witnesses using a \emph{proper partition} function~\mapping.}
    \label{alg:filter_set}
\end{algorithm2e}

\begin{lemma}\label{lem:filter_set_is_correct}
    \Cref{alg:filter_set} outputs $\sel$.
\end{lemma}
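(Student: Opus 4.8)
The plan is to prove $\output=\sel$ by two inclusions, reading the two marking slots of \cref{alg:filter_set} as certificates. The single fact I rely on throughout: within the iteration for a fixed $\refa\in\refset(\superS)$, slot~1 holds exactly $N[\refa]$, so a node~$x$ receives a slot-2 mark with value~$\refa$ \emph{iff} $\mapping(x)=\refa$ and $N(x)\subseteq N[\refa]$. Consequently the final emission test reads: $\isThreeAt{\nodeu}{\refa}$ is output iff $\mapping(\nodeu)=\refa$ and, for every $x\in N[\nodeu]\setminus\{\refa\}$, both $\mapping(x)=\refa$ and $N(x)\subseteq N[\refa]$.

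For $\sel\subseteq\output$: given $\isThreeAt{\nodeu}{\refa}\in\sel$, \cref{lemma:set} gives $\isTwoThreeAt{\nodeu}{\refa}\in\superS$, so $\refa\in\refset(\superS)$ and the outer loop visits~$\refa$. Since $\mapping$ is properly partitioning (\cref{def:proper-partition}), it maps $\nodeu$ and every $x\in N(\nodeu)\setminus\{\refa\}$ to~$\refa$. It then suffices to observe that $\nodeu$ and each such~$x$ is a \ttwothree neighbor of~$\refa$: $\nodeu\in N_3(\refa)$ gives $N(\nodeu)\subseteq N[\refa]$ outright, and any $x\in N(\nodeu)\setminus\{\refa\}$ lies in $N(\refa)\setminus N_1(\refa)$ (the \tthree node~$\nodeu$ has no \tone neighbor), whence $N(x)\subseteq N[\refa]$. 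So the first inner loop slot-2-marks $\nodeu$ and all of $N(\nodeu)\setminus\{\refa\}$, and the second inner loop emits $\isThreeAt{\nodeu}{\refa}$.

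For $\output\subseteq\sel$: given $\isThreeAt{\nodeu}{\refa}\in\output$, the emission test supplies $N(\nodeu)\subseteq N[\refa]$ and $N(x)\subseteq N[\refa]$ for all $x\in N(\nodeu)\setminus\{\refa\}$. First I show $\nodeu\in N(\refa)$: if $\nodeu$ has a neighbor $x\ne\refa$ then $\nodeu\in N(x)\subseteq N[\refa]$; the degenerate cases ($\nodeu$ isolated, $N(\nodeu)=\{\refa\}$, or $\nodeu=\refa$) are either immediate or excluded by the concrete~$\mapping$ of \cref{subsec:mapping}. Then $N(\nodeu)\subseteq N[\refa]$ gives $\nodeu\notin N_1(\refa)$, and every $x\in N(\nodeu)\setminus\{\refa\}$ lies in $N(\refa)$ with $N(x)\subseteq N[\refa]$, hence is not of \tone; therefore $\nodeu\in N_3(\refa)$. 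It remains to prove $\refa=\crepr(\nodeu)$. From $N[\nodeu]\subseteq N[\refa]$, every $y\in N[\nodeu]$ has $N[y]\subseteq N[\refa]$ and so $\deg(y)\le\deg(\refa)$; as $\refa\in N[\nodeu]$, $\refa$ is a maximum-degree vertex of $N[\nodeu]$, and the only obstruction to $\crepr(\nodeu)=\refa$ would be a twin~$v$ of~$\refa$ with $v>\refa$ lying in $N[\nodeu]$. But $N[v]=N[\refa]$ then forces $\crepr(w)\ne\refa$ for every~$w$ with $\refa\in N[w]$, so no node is canonically assigned to~$\refa$ in \cref{alg:compute_supers} and $\refa\notin\refset(\superS)$ --- contradicting that the outer loop visited~$\refa$. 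Hence $\refa=\crepr(\nodeu)$ and $\isThreeAt{\nodeu}{\refa}\in\sel$.

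I expect $\output\subseteq\sel$ to be the hard direction, the real work being the recovery of $\refa=\crepr(\nodeu)$: since \cref{alg:filter_set} never consults the witness list of~$\superS$ and only iterates over $\invMapping$, this identity cannot be read off directly and needs the twin/tie-break argument above together with \cref{alg:compute_supers}. The only other fiddly point is discarding degenerate witnesses, for which I would appeal to the explicit $\mapping$ of \cref{subsec:mapping} rather than to \cref{def:proper-partition} alone.
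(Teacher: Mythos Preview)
Your proof is correct and follows the same two-inclusion structure as the paper's, with essentially the same argument for $\sel\subseteq\output$. For $\output\subseteq\sel$ you are noticeably more careful than the paper: where the paper simply writes ``since $f(\nodeu)=\refa$, by definition, $\isThreeAt{\nodeu}{\refa}\in\sel$'' without justifying $\refa=\crepr(\nodeu)$, you supply an explicit max-degree/twin argument using $\refa\in\refset(\superS)$, and you also verify $\nodeu\in N(\refa)$ and dispose of the degenerate cases that the paper leaves implicit.
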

\begin{proof}
    \def\output{\ensuremath{\mathcal Y}\xspace}
    Let~\output be the output of Alg. \ref{alg:filter_set} for parameters $\superS$ and \mapping.
    In the following, we prove that $\output = \sel$ in two steps.
    We show that (i) $\sel \subseteq \output$ and (ii) $\output \subseteq \sel$.

    \begin{itemize}
        \item Consider $\isThreeAt{\nodeu}{\refa} \in \sel$.
              By assumption, we have $\forall v \in N[u] \setminus \set{\refa}\colon f(v) = \refa$ and therefore $N[u] \setminus \set{\refa} \subseteq \invMapping(\refa)$.
              Since $u$ is a \tthree node to $\refa$, for every node $v \in N[u]$, we have $N[v] \subseteq N[\refa]$ and the first marking scheme thus identifies all nodes in $N[u] \setminus \set{\refa}$ as \ttwothree.
              In that case, it marks \nodeu a second time.
              Therefore, in the second iteration, because $u \in \invMapping(\refa)$ and all nodes $N[u] \setminus \set{\refa}$ are marked twice, we add $\isThreeAt{\nodeu}{\refa}$ to $\output$.

        \item Consider the relation $\isTwoThreeAt{\nodeu}{\refa} \in \output$.
              Then, $\isTwoThreeAt{\nodeu}{\refa}$ must have been added to $\output$ in line~\ref{alg:line_add_output}.
              This only happens if $\nodeu \in \invMapping(\refa)$ and all nodes $N[u] \setminus \set{\refa}$ were marked twice.
              This, in turn, implies that $N[u] \setminus \set{\refa} \subseteq \invMapping(\refa)$ and that $\forall v \in N[u]\colon N[v] \subseteq N[\refa]$; otherwise not all nodes in $N[u] \setminus \set{\refa}$ would been marked a second time in line~\ref{alg:line_is_two_three}.
              Hence, $\nodeu$ is a \tthree node for $\refa$ and since $f(\nodeu) = \refa$, by definition, $\isThreeAt{\nodeu}{\refa} \in \sel$.
    \end{itemize}
\end{proof}

\noindent
The runtime of \cref{alg:filter_set} can be bounded by observing that each node~$u \in V$ chooses at most one reference node in $\superS$.
Thus, over the course of the algorithm, the nested loops (in lines~6 and 9) iterate over a subset of $V$, resulting in a total of $\Oh{n}$ iterations.
In the same spirit, the conditions in lines~7 and~10 consider each edge at most once.
This leads to:

\begin{observation}
   Alg. \ref{alg:filter_set} runs in $\Oh{n+m}$ time.
\end{observation}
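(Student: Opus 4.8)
The plan is a routine amortized running-time analysis; the only point that really needs care is \emph{why} the algorithm escapes the $\Omega(n^3)$ trap of \cref{fig:superlinear-runtime}, and this is precisely where the properness of the map~\mapping enters.

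First I would dispatch the preprocessing. Building the inverse buckets $\invMapping(\refa) = \setc{x \in V}{\mapping(x) = \refa}$ is a single scan over $v \in V$, appending $v$ to bucket $\invMapping[\mapping(v)]$ whenever $\mapping(v)$ is defined; since $\mapping$ is a partial function this takes $\Oh{n}$ time and the buckets have total size at most $n$. Initializing $\output$ and the two marking slots per node is $\Oh{n}$. I would also note that $\superS$, an input to the algorithm, has size $\Oh{n}$ (every node appears as a witness for at most one reference node, namely its canonical one, by \cref{lemma:set}), so the outer loop domain $\refset(\superS)$ can be enumerated without duplicates in $\Oh{n}$ time.

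Second, I would justify that marking costs $\Oh{1}$ per operation with no clean-up. Instead of resetting the $\Theta(\deg(\refa))$ slots at the end of each outer iteration --- the naive step responsible for the blow-up in \cref{fig:superlinear-runtime} --- each slot stores the reference value most recently written to it, and ``$x$ is marked by $\refa$ in slot~$i$'' is the constant-time test that slot~$i$ of $x$ equals $\refa$. As the outer loop runs over the \emph{distinct} nodes of $\refset(\superS)$, a value written in an earlier iteration can never be confused with a current mark, so no resets are needed.

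Finally, the amortized count. Fix $\refa \in \refset(\superS)$: marking $\refa$ and its neighbors in slot~1 costs $\Oh{1 + \deg(\refa)}$, and for each $\nodeu \in \invMapping(\refa)$ the two inner loops scan $N[\nodeu]$ a constant number of times and do $\Oh{1}$ further bookkeeping, i.e.\ $\Oh{1+\deg(\nodeu)}$ work. The crucial observation is that $\mapping$ being a function makes the buckets $\{\invMapping(\refa)\}_{\refa}$ pairwise disjoint subsets of $V$, so each node is the ``$\nodeu$'' of an inner loop at most once over the whole run. Summing,
\[
  \sum_{\refa \in \refset(\superS)}\!\Bigl( \Oh{1 + \deg(\refa)} + \!\!\!\sum_{\nodeu \in \invMapping(\refa)}\!\!\! \Oh{1 + \deg(\nodeu)} \Bigr)
  \;\le\; \sum_{\refa \in V}\!\Oh{1 + \deg(\refa)} + \sum_{\nodeu \in V}\!\Oh{1 + \deg(\nodeu)}
  \;=\; \Oh{n + m},
\]
since $\sum_{v \in V}\deg(v) = 2m$. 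Adding the $\Oh{n}$ preprocessing yields the claimed $\Oh{n+m}$ bound. The main thing to articulate clearly is thus not a calculation but the interplay of two ideas --- timestamped marks (so outer iterations pay nothing for resetting) and the functionality of~\mapping (so the potentially expensive neighborhood scans are charged to each node at most once) --- which is exactly what \cref{def:proper-partition} was set up to deliver.
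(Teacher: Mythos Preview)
Your analysis is correct and matches the paper's own justification, which is given only informally in the prose preceding \cref{lem:filter_set_is_correct}: ``The oracle allows \cref{alg:filter_set} to establish the neighbor type of node~$\nodeu$ at most once, namely in the context of $f(u)$. Since each test costs time $\Oh{1+\deg(\nodeu)}$ the total cost is bound by $\Oh{n+m}$.'' You have simply made this explicit---the disjointness of the $\invMapping$ buckets, the timestamped marks avoiding resets, and the handshaking sum---so there is nothing to correct or contrast.
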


\subsection{Computing a proper partition}\label{subsec:mapping}
In order to apply \cref{alg:filter_set}, we first need to compute a partial mapping $f\colon V \to V$ that is properly partitioning with respect to~$\sel$.
For that, we need to map every \ttwo node that is neighbor to some \tthree node $u$ with $\isThreeAt{u}{\refa} \in \sel$ to $\refa$ in $f$.
At this point however, we do not know which nodes are \tthree nodes in $\witset(\superS)$ and neighbored \ttwo nodes are possibly neighbored to more than one reference node in $\refset(\superS)$.

Hence, to correctly identify these reference nodes, we make use of the fact that every node $u'$ in the neighborhood of $\refa$ that chose another canonical reference node $\refb \in \refset(\superS), \refb \neq \refa$ did so in \cref{alg:compute_supers} due to our tiebreaker because it is neighbored to both $\refa$ and $\refb$ (otherwise $u' \lhd_1 \sigma$).
Since $\isThreeAt{u}{\refa} \in \sel$, $u$ is not neighbored to $\refb$ (otherwise it would have also chosen $\refb$ as its canonical reference node).
Hence, a \ttwo node can correctly identify $\refa$ among its neighbored reference nodes by applying a reversed version of the tiebreaker in \cref{alg:compute_supers} to all its neighbored reference nodes in $\refset(\superS)$.

\begin{algorithm2e}[tb]
    Initialize array~$R$ of ref. nodes $\forall v \in V\colon\ R[v] = \infty$\;
    \For{$\isTwoThreeAt{\nodeu}{\refa} \in \superS$ \label{alg:secondloop}}{
        $R[\nodeu] \gets \rho$\tcp*{By \cref{alg:compute_supers}: each $\nodeu \in V$ is a witness at most once}
    }

    \smallskip

    Initialize empty partial mapping $\mapping\colon V \to V$\;
    \For{$x \in N[\witset(\superS)]$ \label{alg:thirdloop}}{
        \tcp{Map $x$ to reference node of smallest id among neighbors with smallest degree}
        $\mathsf{refs} \gets \setc{R[y]}{y \in N[x] \text{ with } R[y] \in N(x) }$\;
        \If(\hfill\texttt{// $N[x] \cap \refset(\superS) = \set{x}$}){$\mathsf{refs} = \emptyset$}{
            \textbf{continue};
        }
        $\mathsf{minDegInRefs} \gets \min\setc{\deg(y)}{y \in \mathsf{refs}}$\;
        $r \gets \min\setc{x \in \mathsf{refs}}{\deg(x) = \mathsf{minDegInRefs}}$\; \label{alg:lk4}
        \medskip
        Add mapping $\mapping[x] \gets r$\;
    }

    \Return{\mapping}\;

    \caption{Compute a partial mapping $\mapping\colon V \to V$ that is properly partitioning, \ie
        $\forall \isThreeAt{\nodeu}{\refa} \in \sel\colon \forall x \in (N[\nodeu]\setminus\{\refa\})\colon f(x) = \refa$.
    }
    \label{alg:mapping}
\end{algorithm2e}

\medskip

\begin{lemma}\label{lemma:construct}
    \Cref{alg:mapping} computes a partial mapping $\mapping\colon V {\to} V$ that properly partitions with respect to~\sel.
\end{lemma}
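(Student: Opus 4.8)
The plan is to fix an arbitrary pair $\isThreeAt{\nodeu}{\refa}\in\sel$ and an arbitrary node $x\in N[\nodeu]\setminus\set{\refa}$, and to show that the iteration of the main loop of \cref{alg:mapping} (line~\ref{alg:thirdloop}) that processes $x$ assigns $\mapping[x]\gets\refa$. Since \cref{def:proper-partition} quantifies precisely over such $x$, this establishes the claim. Concretely, I would check three things in this order: (i) that $x$ is indeed visited by the loop over $N[\witset(\superS)]$; (ii) that $\refa$ enters the set $\mathsf{refs}$ that gets computed for $x$, so that $x$ is not skipped; and (iii) that, among all members of $\mathsf{refs}$, the node $\refa$ is the one returned by the reversed tiebreaker in line~\ref{alg:lk4} (minimum degree, then minimum index).

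The technical backbone, which I would prove first and then reuse, is the inclusion $N[x]\subseteq N[\refa]$ for every $x\in N[\nodeu]\setminus\set{\refa}$: if $x=\nodeu$ this is just the definition of $\nodeu\in N_3(\refa)$, and if $x\in N(\nodeu)$ then $\nodeu\in N_3(\refa)$ forces $N(\nodeu)\cap N_1(\refa)=\emptyset$, so $x$ is not a \tone neighbor of $\refa$ and hence $N(x)\setminus N[\refa]=\emptyset$. Granting this, point (i) holds because $\nodeu\in\witset(\sel)\subseteq\witset(\superS)$ (by \cref{lemma:set}), so $x\in N[\nodeu]\subseteq N[\witset(\superS)]$; and point (ii) holds because $\isTwoThreeAt{\nodeu}{\refa}\in\superS$ makes the loop at line~\ref{alg:secondloop} set $R[\nodeu]=\refa$ (the only write, since each node witnesses at most one reference node), while $\refa\in N(x)$ (as $x\in N[\refa]\setminus\set{\refa}$) and $\nodeu\in N[x]$, so $\refa=R[\nodeu]\in\mathsf{refs}$. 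For point (iii), take any $\refb\in\mathsf{refs}$ with $\refb\ne\refa$: there is a witness $y\in N[x]$ with $R[y]=\refb$, and since the algorithm only ever assigns $R[y]=\crepr(y)$, we get $\refb=\crepr(y)$. By the backbone inclusion, $y\in N[x]\subseteq N[\refa]$, hence $\refa\in N[y]$; and because $\crepr(y)$ is by definition the largest-index node of maximum degree inside $N[y]$, competing with $\refa\in N[y]$ yields $\deg(\refa)\le\deg(\refb)$, with $\refa<\refb$ whenever the two degrees coincide. As this holds for every competitor, $\refa$ is the unique element of $\mathsf{refs}$ of minimum degree and, among the minimum-degree elements, of minimum index; this is exactly the value assigned to $r$ in line~\ref{alg:lk4}, so $\mapping[x]=\refa$.

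I expect the main obstacle to be recognizing and using the backbone inclusion in the right way: its point is that every node $y$ that contributes an entry to $\mathsf{refs}$ lies in $N[\refa]$, so $\refa$ was a contender in — and lost — the very forward tiebreaker of \cref{alg:compute_supers} that produced $\refb=\crepr(y)$, which is exactly why the reversed tiebreaker of \cref{alg:mapping} hands $\refa$ back. Everything else is routine bookkeeping: that $x$ is visited, that $R$ is written consistently using that each node witnesses at most one reference node in \superS, and that $\mathsf{refs}$ is nonempty so no skip occurs. The one place I would be careful to argue uniformly is the split between $x=\nodeu$ and $x\in N(\nodeu)$ together with the possibility that $R[\refa]$ itself is defined and points to a twin of $\refa$; the argument above never assumes $\mathsf{refs}\subseteq N[\nodeu]$ — only that each witness contributing to $\mathsf{refs}$ lies in $N[\refa]$ — so both situations are absorbed automatically.
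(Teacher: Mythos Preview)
Your proposal is correct and follows essentially the same approach as the paper: both arguments use the inclusion $N[x]\subseteq N[\refa]$ to place any witness $y$ contributing a competitor $\refb\in\mathsf{refs}$ inside $N[\refa]$, and then exploit that $\crepr(y)=\refb$ won the forward tiebreaker of \cref{alg:compute_supers} against $\refa$ to conclude that $\refa$ wins the reversed tiebreaker of \cref{alg:mapping}. The paper phrases this as a short contradiction (assume $f(x)=r_x\neq\refa$ and derive that the algorithm would have picked $\refa$), whereas you argue directly and decompose into steps (i)--(iii); your version is also slightly more careful in explicitly covering the case $x=\nodeu$, which the paper's write-up (quantifying over $x\in N(\nodeu)\setminus\{\refa\}$ and using $\nodeu\in N(x)$) tacitly omits.
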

\begin{proof}[Proof (by contradiction)]
    Let $f$ be the output of \cref{alg:mapping}.
    Assume that there exists $\isThreeAt{u}{\refa} \in \sel$ such that $\exists x \in (N(u) \setminus \set{\refa})$ with $f(x) = r_x \neq \refa$.

    Since $\isThreeAt{u}{\refa} \in \superS$, we consider $x \in N[\witset(\superS)]$ in the loop and have $r_x \in \mathsf{refs}$ in that iteration because a neighbor $y \in N[x]$ proposed $R[y] = r_x$.
    But, by definition, $x$ is either a \ttwo or \tthree node to $\refa$ and thus $N[x] \subseteq N[\refa]$ implying $y \in N[\refa]$.

    Hence, $y$ is connected to both $\refa$ and $r_x$.
    Since $R[y] = r_x$, by \cref{alg:compute_supers}, we have
    \begin{equation*}
        \deg(r_x) > \deg(\refa) \quad\lor\quad \left(\deg(r_x) = \deg(\refa) \land r_x > \refa\right).
    \end{equation*}
    In either case, because $u \in N(x)$ and $R[u] = \refa$, we also have $\refa \in \mathsf{refs}$, and the tiebreaker would thus choose $f(x) = \refa$ and not $f(x) = r_x$.
\end{proof}

\begin{observation}
    Alg. \ref{alg:mapping} runs in time $\Oh{n+m}$.
\end{observation}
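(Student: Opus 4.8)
The plan is to show that each of the three phases of \cref{alg:mapping} runs in time $\Oh{n+m}$ and then add the bounds. First I would dispatch the bookkeeping: allocating and clearing the array~$R$ costs $\Oh{n}$, and likewise initialising the partial mapping~\mapping as a size-$n$ array whose entries are all ``undefined''. For the loop over \superS in \cref{alg:secondloop}, the key fact --- already established through \cref{alg:compute_supers} --- is that every node becomes a witness at most once, so $|\superS| = |\witset(\superS)| \le n$; since each iteration is a single array write, this loop is $\Oh{n}$.

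The subtle point is the loop of \cref{alg:thirdloop}, which ranges over $N[\witset(\superS)]$: I want to touch each such vertex exactly once, because iterating per witness--neighbour incidence could be super-linear. So I would first materialise a deduplicated list of $N[\witset(\superS)]$ using one Boolean marker per node --- scan $N[w]$ for every witness~$w \in \witset(\superS)$, appending and marking each previously unseen node. As the witnesses are pairwise distinct vertices, this costs $\sum_{w \in \witset(\superS)} \Oh{1+\deg(w)} = \Oh{n+m}$, and the resulting list has length at most~$n$.

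Next I would bound the body of the main loop for a fixed~$x$. At the start of $x$'s iteration, I mark the open neighbourhood $N(x)$ with value~$x$, so that the test ``$R[y] \in N(x)$'' becomes an $\Oh{1}$ lookup. Then $\mathsf{refs}$ is built by one scan over $N[x]$; computing $\mathsf{minDegInRefs}$ and the tie-broken minimiser~$r$ are two further passes over a list of length at most $\deg(x)+1$; and the assignment $\mapping[x]\gets r$ is $\Oh{1}$. Hence the iteration for~$x$ costs $\Oh{1+\deg(x)}$, and summing over the pairwise distinct vertices in the materialised list gives $\sum_{x \in V}\Oh{1+\deg(x)} = \Oh{n+m}$.

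Adding the three phases yields the claimed $\Oh{n+m}$ bound. I expect the only real obstacle to be data-structure discipline rather than any combinatorial difficulty: one must enumerate $N[\witset(\superS)]$ without repetition, and the set operations (forming $\mathsf{refs}$ and the adjacency test $R[y]\in N(x)$) must be realised through the per-vertex marking arrays instead of generic set containers. Duplicates in the list $\mathsf{refs}$ --- which arise when several neighbours of~$x$ picked the same canonical reference node --- are harmless for the two minimum computations and do not push its length beyond $\deg(x)+1$, so no extra sorting or deduplication is needed there.
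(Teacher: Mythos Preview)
Your argument is correct; the paper states this as an \emph{observation} without proof, so there is nothing to compare against beyond the implicit reasoning you have spelled out. Your handling of the only nontrivial point --- deduplicating $N[\witset(\superS)]$ and realising the $R[y]\in N(x)$ test via per-iteration marking so that the body for each distinct~$x$ costs $\Oh{1+\deg(x)}$ --- is exactly the data-structure discipline needed to make the bound go through.
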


\section{Applying and Extending the Reductions}\label{sec:applying}
\def\refs{\ensuremath{\mathcal R}\xspace}
\begin{algorithm2e}[tb]
    \def\marked{\ensuremath{\mathsf{marked}}\xspace}
    \def\del{\ensuremath{\mathsf{deletable}}\xspace}

    Initialize set $\marked = \emptyset$\;
    \For{$\refa \in \refs$}{
        $\marked \gets \marked \cup N[\refa]$\label{alg:line-mark} \;
    }

    \smallskip

    Initialize $\del = \emptyset$\;
    \For{$\refa \in \refs$}{
        \For{$\nodeu \in (N(\refa) \setminus \refs)$} {
            \If{all neighbors of \nodeu are marked, \ie $N(\nodeu) \subseteq \marked$ \label{alg:line-marking-cond}}{
                $\del \gets \del \cup \set{\nodeu}$
            }
        }
    }

    \medskip
    Delete all nodes and incident edges in \del\;

    \caption{Apply \Rule1-style reductions to all reference nodes in $\refs$.}
    \label{alg:remove}
\end{algorithm2e}

In \cref{sec:linear_time_search} we showed how to compute the set~${\refs \coloneqq \refset(\sel)}$ of reference nodes to which \Rule1 can be applied to.
Now we connect~$\refs$ back to the problem of reducing a \Dom instance.
From the correctness~\cite{DBLP:journals/jacm/AlberFN04} of \Rule1, we know that there exists a minimal dominating set $\domset \supseteq \refs$.

Thus, it is safe to add all nodes of~$\refs$ to the dominating set.
Alber \etal~implicitly encode this, by adding a gadget leaf node to each reference node in~$\refs$, while practical solver implementations may choose to explicitly fix these nodes.
Since it is trivial to implement either way in linear time, we do not concern ourselves with the details:
for the remainder of this section, we only assume that all nodes of $\refs$ are added to the dominating set in some way.

Given that the nodes~$\refs$ are part of the dominating set, we want to reduce the \Dom instance~$G$ as much as possible.
\Cref{alg:remove} is a first step designed to be reasonably close to the original \Rule1-style reduction of \cref{def:rule1}.
The central idea is to \emph{mark} the closed neighborhood of each reference node in \refs.
Then, a marked node~\nodeu is either in the dominating set itself (\ie $\nodeu \in \refs$) or covered by a neighbor (\ie $N(\nodeu) \cap \refs \ne \emptyset$).
In any case, it is ``taken care of'' as the \Dom requirements towards the node have been satisfied;
yet, it may still be beneficial to add such a node, if it covers sufficiently many unmarked nodes.
The following nodes, however, can be deleted:

\begin{observation}\label{lem:correct-removal}
    Assuming nodes~\refs are added to the dominating set, all marked nodes whose entire neighborhood is marked are redundant (\ie deletable).
\end{observation}

\begin{figure}
    \begin{center}
        \scalebox{1.0}{
        \begin{tikzpicture}[
                vertex/.style={draw, circle, minimum width=1.6em, minimum height=1.6em, inner sep=0},
                edge/.style={draw},
            ]

            \node[vertex] (l1) at (0,0) {$x$};
            \node[vertex, fill=black!10] (r1) at (1.5,0) {$\refa$};
            \node[vertex] (m1) at (3,0) {\nodeu};
            \node[vertex] (m2) at (4.5,0) {\nodev};
            \node[vertex, fill=black!10] (r2) at (6,0) {$\refb$};
            \node[vertex] (l2) at (7.5,0) {$y$};

            \node[minimum width=7.5em, minimum height=3em, draw, opacity=0.3] at (3.75, 0) {};

            \path[edge] (l1) to (r1);
            \path[edge] (r1) to (m1);
            \path[edge] (m1) to (m2);
            \path[edge] (m2) to (r2);
            \path[edge] (r2) to (l2);
        \end{tikzpicture}}
    \end{center}
    \vspace{-1em}
    \caption{
        Deletion of \tone nodes.
        \Rule1 applies to $\refs = \set{\refa, \refb}$ as witnessed by $x \in N_3(\refa)$ and $y \in N_3(\refb)$.
        Observe that the middle nodes~$\nodeu$ and $\nodev$ are \tone neighbors to $\refa$ and $\refb$, respectively.
        Hence, the original \Rule1 leaves them untouched.
        In contrast, \cref{alg:remove} marks all nodes. Thus it also deletes the middle nodes.
    }
    \label{fig:addition-removals}
\end{figure}

The previous observation establishes that \cref{alg:remove} is safe.
In the following, we convince ourselves that we prune at least all nodes the original \Rule1 removes.
\begin{lemma}
    Applying \cref{alg:remove} to $\refs \coloneqq \refset(\sel)$ prunes at least all nodes that an exhaustive application of the original \Rule1 deletes.
\end{lemma}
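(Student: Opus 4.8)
The plan is to show that every node deleted by an exhaustive run of the original \Rule1 is also deleted by \cref{alg:remove} when invoked on $\refs = \refset(\sel)$. Let $\exset$ denote the set of reference nodes selected by the exhaustive run, and let $w$ be any node it deletes. By \cref{def:rule1}, $w$ was deleted because $w \in N_2(\refa) \cup N_3(\refa)$ for some $\refa \in \exset$, which in particular means $N[w] \subseteq N[\refa]$ and $\refa \in N(w)$.

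First I would use \cref{lemma:maxcard} (and its proof) together with \cref{lemma:canon} to transfer the ``responsible'' reference node $\refa \in \exset$ to a reference node $\refb \in \refs$ with $N[\refa] = N[\refb]$: either $\refa \in \refs$ already (take $\refb = \refa$), or $\refa \notin \refs$ in which case \cref{lemma:canon} supplies a twin $\refb \in \refset(\sel) = \refs$ with $N[\refb] = N[\refa]$. In both cases $N[w] \subseteq N[\refb]$ and $\refb \in N(w)$, so $w \in N(\refb) \setminus \refs$ — note $w \notin \refs$ since $w$ was deleted and by \cref{lemma:unaffected} reference nodes in $\refs$ are never deleted; alternatively, if $w$ happened to be a reference node it would be a twin of $\refb$, contradicting the definition of $\sel$. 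Hence $w$ is examined in the inner loop of \cref{alg:remove} for $\refb$.

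Next I would verify the deletion condition of line~\ref{alg:line-marking-cond}, i.e.\ that $N(w) \subseteq \marked$. Every node of $N(w)$ lies in $N[\refb]$ (since $N[w] \subseteq N[\refb]$), and the marking loop of \cref{alg:remove} sets $\marked \supseteq N[\refb]$. Therefore $N(w) \subseteq N[\refb] \subseteq \marked$, so $w$ is added to $\del$ and subsequently deleted. This covers every node that the exhaustive \Rule1 removes, establishing the claim. (The other direction — that \cref{alg:remove} may delete strictly more, e.g.\ the \tone nodes of \cref{fig:addition-removals} — is illustrative and not required here.)

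The main obstacle is the bookkeeping around which reference node ``certifies'' each deleted node: a node $w$ deleted during the exhaustive run is deleted relative to the reference node that was active at \emph{that} step, and after earlier reductions the graph is no longer $G$. I would handle this by observing that \Rule1 only deletes nodes and fixes reference nodes, and \cref{lemma:unaffected} guarantees that the \tthree-witness structure of the still-pending reference nodes is untouched; hence the set $\exset$ of reference nodes ever selected is well-defined, the membership $w \in N_2(\refa)\cup N_3(\refa)$ (and thus $N[w]\subseteq N[\refa]$, $\refa \in N(w)$) can be read off in the original graph $G$ at the moment $w$ is deleted, and these closed-neighborhood containments are monotone under vertex deletion, so they persist as statements about $G$ itself. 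Once that is pinned down, the rest is the short containment argument above.
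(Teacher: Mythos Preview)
Your overall strategy mirrors the paper's: invoke \cref{lemma:maxcard}/\cref{lemma:canon} to map each $\refa\in\exset$ to some $\refb\in\refs$ with $N[\refa]=N[\refb]$, and then verify that the marking condition of \cref{alg:remove} fires. The paper executes this at a coarser granularity --- it appeals to \cref{lemma:maxcard} and \cref{lemma:unaffected} to say ``$\refs$ is the full set and order is irrelevant'', and then only checks that, for a fixed $\refa\in\refs$, the marking scheme recognises exactly the condition $N[u]\subseteq N[\refa]$ characterising $N_2(\refa)\cup N_3(\refa)$. Your version traces an individual deleted node $w$ through the exhaustive run, which is more explicit but forces you to confront the fact that the run operates on a sequence of shrinking graphs $G=G_0,G_1,\dots$.

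That is exactly where your argument breaks. In the final paragraph you assert that the containment $N[w]\subseteq N[\refa]$ ``persists as a statement about $G$ itself'' because ``closed-neighborhood containments are monotone under vertex deletion''. The monotonicity goes the \emph{other} way: if $N_G[w]\subseteq N_G[\refa]$ then $N_{G_t}[w]\subseteq N_{G_t}[\refa]$ for any induced subgraph $G_t$, but not conversely. A neighbour $v'\in N_G(w)$ that was deleted before step~$t$ can lie outside $N_G[\refa]$, making $w$ a \tone neighbour of $\refa$ in $G$ even though it is \ttwothree in $G_t$. Consequently you cannot conclude $N_G[w]\subseteq N_G[\refb]$, and both your deletion check (line~\ref{alg:line-marking-cond}) and your ``$w\notin\refs$ via twin-contradiction'' argument rest on that unavailable containment.

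The repair is not to lift $N[w]\subseteq N[\refa]$ to $G$, but to argue directly that $N_G(w)\subseteq\marked$. Split $N_G(w)$ into its surviving part $N_{G_t}(w)\subseteq N_{G_t}[\refa]\subseteq N_G[\refa]$, which is marked via the twin $\refb\in\refs$, and the previously deleted neighbours, each of which was a \ttwothree neighbour of some earlier $\refb'\in\exset$ and hence lies in $N_G[\refb']$, again marked via the corresponding twin in $\refs$. This recovers the conclusion without the faulty monotonicity step; it is also what the paper's one-line appeal to \cref{lemma:unaffected} (``order is irrelevant'') is implicitly packaging.
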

\begin{proof}
    In \cref{lemma:maxcard}, we already showed that $\refs$ is the full set of nodes where \Rule1 is applicable.
    From \cref{lemma:unaffected}, we also know that the order in which we apply the reductions is irrelevant.
    Thus, we only need to show that we carry out the transformations correctly.

    Let $\refa \in \refs$ be a reference node.
    Then, \Rule1 deletes all its $(N_2(\refa) \cup N_3(\refa))$ neighbors.
    Recall, that a neighbor is of \ttwothree for the reference node~\refa, if all its neighbors are contained in $N[\refa]$:
    \begin{equation}
        \forall \nodeu {\in} N(\refa)\colon \
        \nodeu \in N_2(\refa) {\cup} N_3(\refa) \ \Leftrightarrow\ N[\nodeu] {\subseteq} N[\refa]\,.
    \end{equation}

    The marking scheme of \Cref{alg:remove} implements the second condition:
    If a reference node~$\refa$ marks all its neighbors $N[\refa]$, then the entire closed neighborhood of some $\nodeu \in N(\refs)$ is marked if $N[u] \subseteq N[\refa]$.
    Thus, we identify all nodes that the original rule deletes.
\end{proof}

\begin{observation}
    Alg. \ref{alg:remove} runs in time $\Oh{n + m}$.
\end{observation}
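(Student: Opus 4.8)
The plan is to bound the running time of \Cref{alg:remove} phase by phase, under the standing assumptions of the paper (adjacency-list representation, vertices usable as array indices in $\Oh{1}$ time). First I would realise every membership structure — the \emph{marked} set, the \emph{deletable} set, and the auxiliary flag introduced below — as a Boolean array indexed by $V$ accompanied by an explicit list of the vertices inserted so far; this makes each individual insertion, test, and deletion cost $\Oh{1}$ and lets us enumerate a set without scanning all of $V$.

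For the first loop I would argue that marking $N[\refa]$ for a single $\refa \in \refs$ costs $\Oh{1 + \deg(\refa)}$, so the loop costs $\Oh{\,|\refs| + \sum_{\refa \in \refs}\deg(\refa)\,} = \Oh{n + m}$, using $|\refs| \le n$ and $\sum_{\refa \in \refs}\deg(\refa) \le \sum_{v \in V}\deg(v) = 2m$. The concluding deletion step is handled the same way: removing each vertex of \del together with its incident edges touches every affected vertex and edge only a constant number of times, for a total of $\Oh{n + m}$.

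The one place where I expect trouble, and the main point the argument must make, is the nested loop that populates \del. Taken literally, it inspects a vertex $\nodeu$ once for every reference node in $N(\nodeu) \cap \refs$, and each inspection runs a subset test costing $\Oh{1 + \deg(\nodeu)}$; exactly as in \cref{fig:superlinear-runtime}, this can blow up to $\Omega(n^3)$. I would therefore first observe that the predicate tested for $\nodeu$ (``all of $N(\nodeu)$ is marked'') does not depend on which $\refa \in \refs$ led us there, so it suffices to evaluate it once per vertex. Concretely, guard the test with a Boolean ``inspected'' flag for $\nodeu$, or equivalently iterate a single time over $N(\refs) \setminus \refs$ — which, by construction of the marking phase, is precisely the set of marked vertices lying outside $\refs$. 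Each distinct $\nodeu \in N(\refs)$ is then tested once, and $\sum_{\nodeu \in N(\refs)} \Oh{1 + \deg(\nodeu)} = \Oh{n + m}$.

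Summing the three bounds yields the claimed $\Oh{n + m}$ running time. Thus the substance of the proof is not the arithmetic but this de-duplication remark: noticing that a faithful reading of \Cref{alg:remove} is super-linear, and that a trivial modification — processing each candidate vertex at most once — restores linearity while leaving the output (and hence the correctness guarantees of \cref{lem:correct-removal} and the preceding lemmas) untouched.
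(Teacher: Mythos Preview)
The paper records this statement as a bare observation and gives no proof, so there is no argument to compare against line by line.  Your analysis is correct and in fact more careful than the paper itself: you rightly flag that the nested loop, read verbatim, may re-examine a vertex $\nodeu$ once for every reference node in $N(\nodeu)\cap\refs$, so the total cost is $\sum_{\nodeu}\,|N(\nodeu)\cap\refs|\cdot\Oh{1+\deg(\nodeu)}$, which is not $\Oh{n+m}$ in general.  Your remedy---noting that the predicate in line~\ref{alg:line-marking-cond} depends only on $\nodeu$ and the already-fixed set~$\mathsf{marked}$, not on $\refa$, so each candidate need be tested at most once---is exactly the intended reading, and with it the linear bound is immediate.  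One small caveat: the specific graph of \cref{fig:superlinear-runtime} would not trigger the blow-up here, because that graph has no \tthree nodes and hence $\refs=\emptyset$; the analogy you draw is nonetheless the right one, and it is easy to construct instances (for example a balanced complete bipartite graph with a pendant leaf attached to each vertex of one side, forcing that whole side into $\refs$) where the literal pseudocode does spend $\Theta(n^{3})$ time.
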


Thus, we know that we are doing at least as well as the original formulation.
In fact, as illustrated in \cref{fig:addition-removals}, we are already doing better.

\subsection{Pruning even more nodes}\label{subsec:prune-more-nodes}
\begin{figure}
    \begin{center}
        \scalebox{1.0}{
        \begin{tikzpicture}[
                vertex/.style={draw, circle, minimum width=1.6em, minimum height=1.6em, inner sep=0},
                edge/.style={draw},
            ]

            \node[vertex] (l1) at (0,0) {$x$};
            \node[vertex, fill=black!10] (r1) at (1.2,0) {$\refa$};
            \node[vertex] (m1) at (2.4,0) {\nodeu};
            \node[vertex] (c) at (3.6, 0) {$c$};
            \node[vertex] (m2) at (4.8,0) {\nodev};
            \node[vertex, fill=black!10] (r2) at (6,0) {$\refb$};
            \node[vertex] (l2) at (7.2,0) {$y$};

            \node[minimum width=3em, minimum height=3em, draw, opacity=0.3, label=below:{\footnotesize only unmarked node}] at (c) {};

            \path[edge] (l1) to (r1);
            \path[edge] (r1) to (m1);
            \path[edge] (m1) to (c);
            \path[edge] (c) to (m2);
            \path[edge] (m2) to (r2);
            \path[edge] (r2) to (l2);
        \end{tikzpicture}}
        \vspace{-1em}
    \end{center}
    \caption{
        Deletion of more \tone nodes.
        \Rule1 applies to $\refs = \set{\refa, \refb}$ as witnessed by $x \in N_3(\refa)$ and $y \in N_3(\refb)$.
        Then, \cref{alg:remove} marks all nodes but the middle~$c$ which prevents the deletion of nodes~$\nodeu$ and~$\nodev$ (since they are adjacent to the unmarked $c$).
    }
    \label{fig:addition-removals-plus}
\end{figure}

The pruning scheme (Algorithm~\ref{alg:remove}) and its underlying correctness guarantees in~\cref{lem:correct-removal} are designed to mirror the original \Rule1.
Here we state a stronger version:

\begin{lemma}
    Assuming nodes~\refs are added to the dominating set, all marked nodes~$\nodeu \notin \refs$ with \emph{at most one unmarked} neighbor are redundant.
\end{lemma}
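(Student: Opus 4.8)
The plan is to extend the vertex-exchange argument behind \cref{lem:correct-removal}. Fix a marked node $\nodeu \notin \refs$ with at most one unmarked neighbor. If $\nodeu$ has no unmarked neighbor at all, the statement is exactly \cref{lem:correct-removal}, so we may assume $\nodeu$ has a unique unmarked neighbor, which we call $c$. Since $\nodeu$ is marked but not itself a reference node, some $\refa \in \refs$ satisfies $\refa \in N(\nodeu)$; and since every node of $\refs$ is marked (by the marking step of \cref{alg:remove}) while $c$ is not, we have $c \notin \refs$ and hence $c \ne \refa$. I will show that $G$ and $G - \nodeu$ have the same domination number among all dominating sets that contain $\refs$, which is precisely what ``redundant'' means in this context.

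For the first inequality, I start from a minimum dominating set $\domset$ of $G$ with $\refs \subseteq \domset$ and assume $\nodeu \in \domset$ (otherwise there is nothing to do). Let $W = \setc{w \in V}{N[w] \cap \domset = \set{\nodeu}}$ be the set of vertices whose only dominator is $\nodeu$. Every $w \in W$ lies in $N[\nodeu]$; it cannot be $\nodeu$ itself, since $\refa \in N[\nodeu] \cap \domset$ and $\refa \ne \nodeu$; and it cannot be a marked vertex, since if $w \in N[\refb]$ for some $\refb \in \refs$ then $\refb \in N[w] \cap \domset$ with $\refb \ne \nodeu$. Hence every element of $W$ is an unmarked neighbor of $\nodeu$, so $W \subseteq \set{c}$. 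Therefore $\domset' \coloneqq (\domset \setminus \set{\nodeu}) \cup \set{c}$ is again a dominating set of $G$: every vertex outside $W$ keeps a dominator in $\domset \setminus \set{\nodeu} \subseteq \domset'$, and $c$ (the only possible member of $W$) dominates itself. Moreover $\refs \subseteq \domset'$ (as $\nodeu \notin \refs$), $\nodeu \notin \domset'$, and $|\domset'| \le |\domset|$ since we removed $\nodeu \in \domset$ and added at most the single vertex $c$. As $\nodeu \notin \domset'$, the set $\domset'$ lies in $V(G-\nodeu)$ and dominates every vertex of $G - \nodeu$, each dominator used being different from $\nodeu$. Hence the minimum size of a dominating set of $G-\nodeu$ containing $\refs$ is at most that of $G$.

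For the converse inequality, let $\domset'$ be any dominating set of $G - \nodeu$ with $\refs \subseteq \domset'$. The only vertex of $G$ absent from $G - \nodeu$ is $\nodeu$, and it is dominated by $\refa \in \refs \subseteq \domset'$; every other vertex is already dominated by $\domset'$. So $\domset'$ is a dominating set of $G$ containing $\refs$, which gives the reverse inequality. Combining the two, deleting $\nodeu$ and its incident edges leaves the domination number unchanged, so $\nodeu$ is redundant.

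The only genuine obstacle is the structural claim that the ``endangered'' set $W$ consists solely of unmarked neighbors of $\nodeu$; this is exactly where the marking scheme and the hypothesis ``at most one unmarked neighbor'' enter, pinning the potential collateral damage of removing $\nodeu$ down to the single vertex $c$, which is then repaired by adding $c$ itself. Everything else — checking that $\domset'$ still dominates, still contains $\refs$, and is no larger, and the easy direction through $\refa$ — is routine bookkeeping. (Note that with ``at most one'' replaced by ``zero'' unmarked neighbors, $W = \emptyset$ and the argument collapses to \cref{lem:correct-removal}.)
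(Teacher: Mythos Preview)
Your proof is correct and follows essentially the same exchange argument as the paper: reduce to the case of a single unmarked neighbor, swap $\nodeu$ for that neighbor in a minimum dominating set containing $\refs$, and observe that nothing else becomes uncovered because every other neighbor of $\nodeu$ is marked and hence already dominated by some vertex of $\refs$. Your version is more careful in two respects---you explicitly identify the set $W$ of vertices whose only dominator is $\nodeu$ and argue $W \subseteq \set{c}$, and you spell out the reverse inequality via $\refa \in N(\nodeu)$---whereas the paper leaves these as one-line observations, but the underlying idea is identical.
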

\begin{proof}
    \cref{lem:correct-removal} establishes the claim for marked nodes without unmarked neighbors.
    Thus, we consider only the remaining case:
    Let $\nodeu$ be a marked node with a single unmarked neighbor~$x \in N(\nodeu)$.
    Clearly, $\nodeu \notin \refs$ as all nodes in~\refs are added to the dominating set and thus mark their entire neighborhood.

    The claim follows from an exchange argument analogously to \Rule{Leaf}.
    Assume $\mathcal{D}$ is a minimum dominating set containing $\refs \cup \set{\nodeu}$.
    Removing node~\nodeu from $\domset$ only leaves node~$x$ uncovered since all other nodes are marked.
    Hence, we obtain another dominating set $\domset' = \domset \setminus \set{\nodeu} \cup \set{x}$ of the same cardinality.
\end{proof}

We can implement this stronger variant by modifying line~\ref{alg:line-marking-cond} in \cref{alg:remove} in the obvious way without affecting the asymptotic runtime.
As exemplified in \cref{fig:addition-removals-plus}, the modification leads to strictly stronger pruning capabilities.
In this case, even multiple rounds of the reduction rule may become possible.

\subsection{Pruning more edges}\label{subsec:prune-more-edges}
While the marking scheme of \cref{alg:remove} is designed to remove as many nodes as possible, some \tone neighbors can remain.
However, we can remove edges between those nodes:

\begin{lemma}
    Assuming nodes~\refs are added to the dominating set, all edges $\set{x, y}$ between marked nodes~$x \notin \refs$ and~$y \notin \refs$ are redundant.
\end{lemma}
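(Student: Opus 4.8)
The plan is to prove that an edge $\set{x,y}$ between two marked non-reference nodes can be deleted without changing the domination number, by establishing a two-way safety argument: removing the edge neither decreases $\gamma(G)$ (because any dominating set of $G - \set{x,y}$ still dominates $G$) nor increases it (because any minimum dominating set of $G$ can be converted into one of $G - \set{x,y}$ of the same size). The second direction is the substantive one and will rely on the same marking invariant used throughout \cref{sec:applying}, namely that every marked node is either in $\refs$ or has a neighbor in $\refs$ (hence is covered by the dominating set $\domset \supseteq \refs$).

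\textbf{Key steps.} First, I would fix the dominating set $\domset \supseteq \refs$ guaranteed by the correctness of \Rule1, and let $G' = (V, E \setminus \set{\set{x,y}})$ be the instance with the edge removed. Since $x \notin \refs$ and $y \notin \refs$ are both \emph{marked}, each of them has a neighbor in $\refs \subseteq \domset$; crucially, the edge $\set{x,y}$ is never the \emph{only} reason either endpoint is dominated by $\domset$. The direction $\gamma(G') \ge \gamma(G)$ is immediate: $E \setminus \set{\set{x,y}} \subseteq E$, so every dominating set of $G'$ is a dominating set of $G$. For $\gamma(G') \le \gamma(G)$, take a minimum dominating set $\domset$ of $G$; I would argue we may assume $\domset \supseteq \refs$ (this is exactly the content of the \Rule1 correctness invoked at the start of \cref{sec:applying}). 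Now the only vertices whose domination status could change when deleting $\set{x,y}$ are $x$ and $y$ themselves, and only if $x \in \domset$ was covering $y$ (or vice versa). But $y$ is marked, so it has a neighbor $\refb \in \refs \subseteq \domset$ with $\refb \ne x$; the edge $\set{y, \refb}$ survives in $G'$, so $y$ remains dominated. Symmetrically for $x$. Hence $\domset$ is still a dominating set of $G'$, giving $\gamma(G') \le |\domset| = \gamma(G)$.

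\textbf{Main obstacle.} The only delicate point is justifying that we may take the minimum dominating set $\domset$ to contain all of $\refs$ — i.e. that restricting attention to solutions with $\refs \subseteq \domset$ does not lose optimality. This is precisely guaranteed by the safeness of \Rule1 (\cref{def:rule1} and the cited \cite[Lemma~1]{DBLP:journals/jacm/AlberFN04}), which we invoke at the opening of \cref{sec:applying} when we state that there exists a minimal dominating set $\domset \supseteq \refs$; with that in hand the rest is a short neighbor-counting argument using that $x, y$ are marked. One subtlety worth a sentence: if $x \in \refs$ or $y \in \refs$ the claim would be vacuous in a different way (a reference node marks its whole neighborhood), but the hypothesis explicitly excludes $x, y \in \refs$, so both endpoints genuinely rely on some reference neighbor rather than on each other, which is what makes the edge removable. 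I expect no real difficulty beyond making this dependency on the marking invariant explicit.
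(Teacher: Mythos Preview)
Your proposal is correct and follows essentially the same approach as the paper: both arguments hinge on the single observation that a marked non-reference node has a neighbor in $\refs$, so neither endpoint of $\set{x,y}$ ever relies on the other for coverage. The paper states this in two informal sentences, whereas you spell out the two inequalities $\gamma(G') \ge \gamma(G)$ and $\gamma(G') \le \gamma(G)$ explicitly; your ``main obstacle'' (that we may assume $\refs \subseteq \domset$) is already the stated hypothesis of the lemma rather than something to be justified inside the proof.
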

\begin{proof}
    Let $\set{x,y} \in E$ be an edge between two marked nodes $x, y \notin \refs$.
    Since both endpoints are marked they are already covered through \refs.
    Thus, there is no ``internal'' need to add either $x$ or $y$ to the dominating set.
    While the nodes may be relevant for potential unmarked neighbors $N(x)$ and $N(y)$, respectively, adding $x$ or $y$ (or both) to the dominating set does not affect the coverage of the other.
\end{proof}

The lemma can be implemented in time $\Oh{n+m}$ by iterating over all edges remaining at the end of \cref{alg:remove} and checking the condition in constant time per edge.

\section{Experiments}\label{sec:experiments}
In this section, we evaluate the various \Rule1 variants discussed in this paper on real-world and synthetic datasets.
If not stated otherwise, we performed the experiments on three datasets denoted as \textsc{Main-D}; two derived from publicly available instances of recent PACE iterations, namely the heuristic instances from the PACE~2023 \textsc{Twin-Width} and the PACE~2025 \Dom tracks.
And thirdly, real-world instances from \textsc{Network Repository}~\cite{DBLP:journals/sigkdd/RossiA15}, a dataset comprised of graphs from various domains, among them are social, biological, brain and road networks.
Since some instances from \textsc{Network Repository} are directed networks, we treat each instance as an undirected graph --- multi-edges arising from this circumstance are removed.
We focus on large graphs, since these were often deemed out of reach for a faithful implementation of \Rule1, hence we only consider instances from \textsc{Network Repository} with more than $10^4$ edges.

All experiments are run on a server with an AMD EPYC 7702P 64-Core Processor and 512GB of RAM.
Our implementations are functionally equivalent to the formulations discussed in \cref{sec:applying}, reasonably optimized, and use the fast and low-level programming language Rust.
We consider the original \Rule1 (\textsc{Naive}), our new base algorithm sketched in \cref{alg:remove} (\textsc{Linear}), the extension of \cref{subsec:prune-more-nodes} (\textsc{Plus}), and finally the strongest version of~\cref{subsec:prune-more-edges} (\textsc{Extra}).
In our implementation, we also remove all incident edges around nodes in $\mathcal{D}$ after marking their entire neighborhood as \emph{covered}.

We implement \textsc{Naive} more efficiently than stated in the original paper by invoking \cref{alg:remove} for $\refs = \set{\nodeu}$ for every $\nodeu \in V$ only once (as opposed to multiple \Rule1 invocations).
The resulting algorithm has a runtime of $\Theta(n\cdot m)$.

\paragraph{Outline.} 
We split the experimental part of the paper into two parts: runtime and data reduction.
The former supports the claim that our new algorithms are not only more efficient in theory but also faster in practice.
Then, we quantify the effect of our improved rules on different aspects of data reduction quality.

\subsection{Runtime}

\begin{figure}[t]
    \centering
    \includegraphics[width=0.5\textwidth]{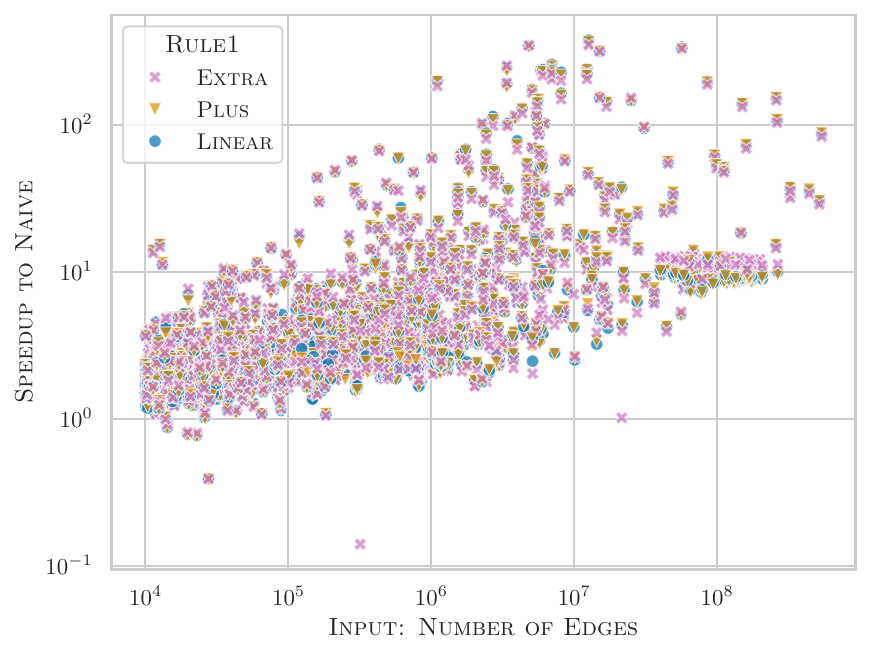}
    \caption{Speedup of various \Rule 1 variants over \textsc{Naive} evaluated on \textsc{Main-D}.}
    \label{fig:speedup}
\end{figure}

\begin{figure}[t]
    \centering
    \includegraphics[width=0.5\textwidth]{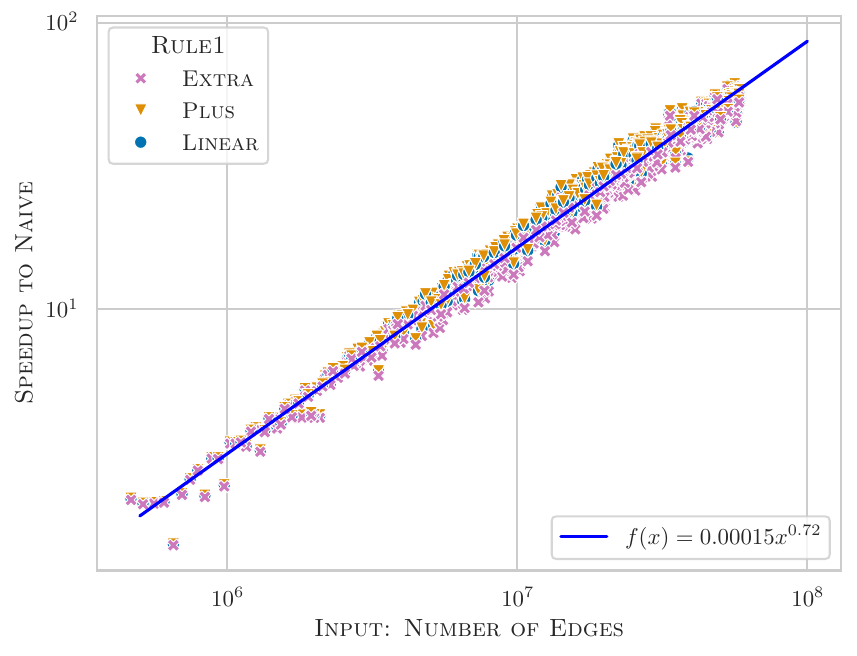}
    \caption{Speedup of \Rule1-variants over \textsc{Naive} for hyperbolic graphs with $\beta = 3.0$, temperature $T = 0$ and $n = 10^5$ nodes with varying average degrees.}
    \label{fig:scaling_time}
\end{figure}

\begin{figure*}[t]
    \centering
    \includegraphics[width=0.333\textwidth]{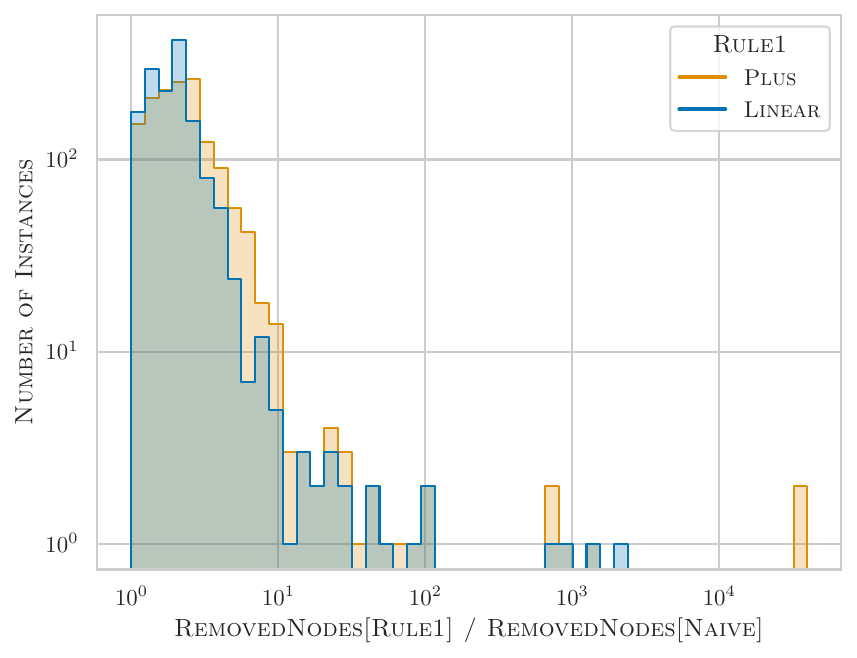}\hfill
    \includegraphics[width=0.333\textwidth]{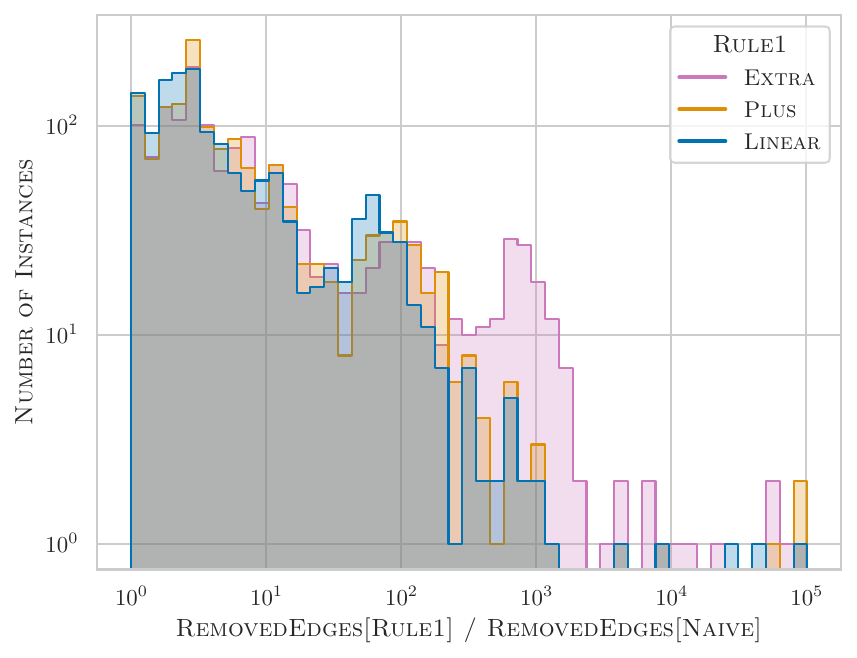}\hfill
    \includegraphics[width=0.333\textwidth]{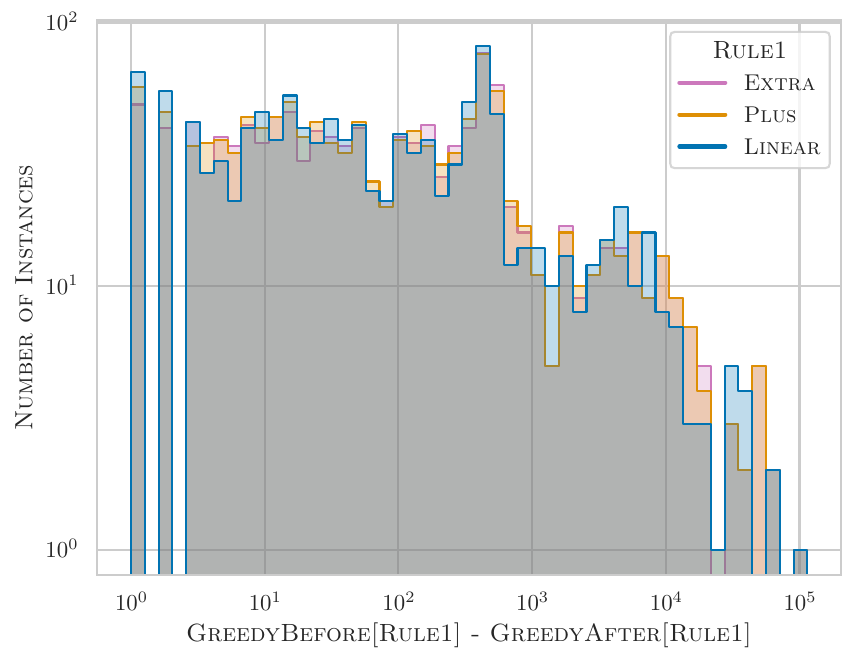}
    \caption{
        Relative increase in removed nodes \textit{(l)} / removed edges \textit{(m)} / total difference in greedy \Dom solution size \textit{(r)} in relation to the number of instances where this happened, evaluated on datasets \textsc{Main-D}.
        }
    \label{fig:pruning}
\end{figure*}
As evident from \cref{fig:speedup}, we obtain a speedup over \textsc{Naive} for all variants of our algorithm on almost every instance.
There are very few exceptions where \textsc{Naive} is actually faster; however, these are small instances and the maximum total runtime difference is only $0.23$ seconds.
In contrast, there are multiple instances where all variants of our algorithm are more than $100\times$ faster than \textsc{Naive}.
There are $12$ instances where \textsc{Naive} did not finish within the time budget of one hour, whereas every other variant finished within $58$ seconds on average. 

Across the whole dataset, we see a speedup of $12.2\times$, $12.0\times$, and $12.1\times$ for \textsc{Linear}, \textsc{Plus}, and \textsc{Extra} over \textsc{Naive} on average.
If we restrict the experiment to instances where $m\geq 10^6$, these speedups more than double to $26.5\times$, $26.1\times$, and $26.0\times$ for \textsc{Linear}, \textsc{Plus}, and \textsc{Extra}.

Although we observe a much better scaling behavior from our algorithms than \textsc{Naive}, the speedups are less pronounced than expected from such large asymptotic worst-case gaps.
Hence, let us quickly recall the runtime complexity of \textsc{Naive}, which is trivially bounded by $\Oh{\sum_{v\in V}\sum_{u \in N[v]} \deg(u)}=\Oh{n \cdot m}$.
Our experiments suggest that this bound is not representative of the algorithmic behavior of \textsc{Naive} for most instances.

Let us indulge in a little thought experiment.
Assume that the adjacency list of each node is randomly permuted; then the expected runtime of \textsc{Naive} is bounded from above by $\Oh{\mathbb{E}\left[\sum_{v \in V} \sum_{u \in N[v]} \operatorname{Geom}(p_{u,v})\right]}$, where $\operatorname{Geom}(p_{u,v})$ draws from the geometric distribution with success probability $p_{u,v} = 1-\frac{|N[u]\cap N[v]|}{|N[u]|+1}$.
Observe, we can abort the local neighborhood inspection of $u$ as soon as we find the first node in $N[u]\setminus N[v]$; the distribution of this time point is given by $\operatorname{Geom}(p_{u,v})$.
Clearly, this translates to an expected linear runtime for \textsc{Naive} on $\mathcal{G}(n,p)$ graphs\footnote{By $\mathcal{G}(n,p)$ graphs we refer to graphs of order $n$ drawn uniform at random from the Gilbert model~\cite{gilbert1959random}, where each edge exists independently with probability $p$.} for any $p\leq c$ where $0 \leq c < 1$ is any real valued constant.
This explains the two properties witnessed in \cref{fig:speedup}: the high amount of noise as well as the comparatively low speedups given the asymptotic gap between the algorithms.
Even under these circumstances, our algorithm outperforms \textsc{Naive} significantly on most instances.

Based on the observations in \cref{fig:speedup} and our thought experiment, we suggest that the runtime of \textsc{Naive} (and in extension the speedup of our algorithms) depends mainly on two parameters: namely, the neighborhood homogeneity and the average degree.
Since real-world networks are complex and exhibit a multitude of structural phenomena related to their specific evolution and domain, we use a random graph model to investigate the speedup of our algorithms to better control for the aforementioned parameters.

To this end, we generate hyperbolic random graphs with $n=10^5$ nodes and varying average degree using the generator of~\cite{DBLP:journals/netsci/BlasiusFKMPW22}.
Our choice of parameters induces graphs with a high clustering coefficient and a degree distribution following a power law, \ie $\mathbb{P}[\deg(v) = k] \sim k^{-\beta}$ for $\beta=3.0$.
Here, the high clustering coefficient is a desirable property for two reasons: it is an often observed property of real-world networks, and it can be seen as a proxy for the neighborhood homogeneity.

Observe that \cref{fig:scaling_time} supports our hypothesis that a larger average degree in combination with a high clustering coefficient translates into larger speedups of our algorithms over \textsc{Naive}.
As a result, the speedup can be described as a function in the number of edges --- increasing the edges by a factor of ten increases the speedup by at least a factor of $5$ on these instances.
Therefore, our model suggests that a high clustering coefficient in graphs serves as a magnifier that exacerbates the impact of the increasing number of edges significantly.

\paragraph{Multiple applications.}
Recall that \cref{lemma:unaffected} states that a single application of \Rule 1 (\ie \textsc{Naive} and \textsc{Linear}) suffices to find all possible \tthree nodes.
However, the extensions \textsc{Plus} and \textsc{Extra} allow the iterative application of these rules since additional node-removals can cut off entire parts (neighborhoods) of the graph.
We can further improve subsequent applications of \Rule1 by leveraging non-removed nodes that are marked as \emph{covered} (since they neighbored a now removed dominating node).
Then, we only consider uncovered neighbors of nodes for the type $1$ classification. 
Notice that multiple applications potentiate the speedups gained by our algorithms, as these are interdependent invocations of \Rule 1.
We only consider iterated applications of \textsc{Extra}, since \textsc{Plus} and \textsc{Extra} have very similar runtimes and the additional edge-removals of \textsc{Extra} have no influence on further \Rule1 applications.
In some cases, more than $1000$ iterations are possible until \textsc{Extra} no longer modifies the graph.
However, more than $95\%$ of instances terminate after at most 17 rounds.

\subsection{Data Reduction}
We consider mainly two dimensions of data reduction: (i) the number of nodes and edges that are removed by an application of the variants of \Rule 1, (ii) the improvement of the \textsc{Greedy} algorithm on the reduced instances.
In all our implementations (including \textsc{Naive}), we add a suitable reference node $\rho$ to the \Dom and remove $N_2(\rho)\cup N_3(\rho)$ without introducing a gadget node.

\paragraph{Pruning.}
\Cref{fig:pruning} indicates that \textsc{Plus} removes between $1{\times}-10{\times}$ more nodes than \textsc{Naive}; in four cases, we even observe up to $\num{10000}\times$ more removed nodes.
On average, \textsc{Plus} removes $59.8\times$ and \textsc{Linear} removes $6.1\times$ more nodes than \textsc{Naive}.
Even in the median, \textsc{Plus} removes $2.16\times$ more nodes than \textsc{Naive}.
In addition, \textsc{Linear}, \textsc{Plus} and \textsc{Extra} reduce on average $136.6\times$, $206.7\times$ and $410.9\times$ more edges than \textsc{Naive} with outliers as high as $10^5\times$ in some cases.
In the median, \textsc{Extra} still removes $4.95\times$ more edges than \textsc{Naive}.

The extremely high outliers can be attributed to graphs with star-like subgraphs where \textsc{Naive} does identify the center as a reference node but fails to remove most of the type $1$ `satellites' due to having no information about other reference nodes.
For our stronger variants, most of the `satellites' (and their edges to the reference node) are removable due to the extensions of \textsc{Linear} and \textsc{Plus}.

\paragraph{Effect on Greedy.}
Especially on massive graphs, which often permit only limited solvers due to resource and time constraints, there is a lack of powerful and safe reduction rules.
We study the effect of our reduction rules on the solution size obtained with one of the arguably fastest and simplest heuristics for \Dom, namely the \textsc{Greedy} heuristic.
To reduce the noise in the \textsc{Greedy} heuristic, resulting from tiebreaks between nodes of equal merit, we derive a consistent but random tiebreaker. 
We then run \textsc{Greedy} $10$ times, each time with a different random tiebreaker, and report the best solution size among the $10$ invocations.

The reduction rules improve the solution size of the \textsc{Greedy} heuristic by up to  $10^5$ nodes in absolute terms.
Upfront, this is in no way conspicuous, the interactions between \Rule1 and Greedy are non-trivial:
In \cref{sec:applying}, we prove that any potential \tthree node can only be a \tthree node to its neighbor of highest degree.
\textsc{Greedy}, on the other hand, iteratively adds a node with the most uncovered neighbors to be included in the \Dom.
Despite the obvious similarity, neither \Rule 1 nor its variants are subsumed by \textsc{Greedy}, and its application before \textsc{Greedy} results in (i) better solution sizes and (ii) only a negligible increase in runtime.

\section{Conclusion and Outlook}
We considered the established reduction rule, \Rule1 by Alber~\etal~\cite{DBLP:journals/jacm/AlberFN04} and improved both its asymptotic and practical speed, as well as its effectiveness.
As a common building block of kernelization algorithms/preprocessing routines, our contribution is expected to have a significant impact in practice.
Our most capable extension removes, on average, $59.8\times$ more nodes and $410.9\times$ more edges compared to the original formulation while being $12.1\times$ faster.
It would be interesting to see if one can improve the runtime complexity of higher order variants/generalizations as well~\cite{DBLP:journals/jacm/AlberFN04,Alber2006b}.
We leave these questions open for future work.

\section*{Acknowledgements}
The original research question arose while preparing for the Parameterized Algorithms and Computational Experiments (PACE) Challenge 2025 on solving Dominating Set in practice.

\clearpage

\bibliographystyle{siamplain}
\bibliography{article-bib2doi-checked} 

@article{DBLP:journals/jacm/AlberFN04,
  author = {Jochen Alber and Michael R. Fellows and Rolf Niedermeier},
  title = {Polynomial-time data reduction for dominating set},
  journal = {J. {ACM}},
  volume = {51},
  number = {3},
  pages = {363--384},
  year = {2004},
  timestamp = {Wed, 14 Nov 2018 00:00:00 +0100},
  biburl = {https://dblp.org/rec/journals/jacm/AlberFN04.bib},
  bibsource = {dblp computer science bibliography, https://dblp.org},
  doi = {10.1145/990308.990309},
  _bib2doi_selected = {dblp:/rec/journals/jacm/AlberFN04.bib},
  _bib2doi_confirmed = {true},
}

@book{FominLSZ19,
  title = {Kernelization: Theory of Parameterized Preprocessing},
  author = {Fomin, Fedor V. and Lokshtanov, Daniel and Saurabh, Saket and Zehavi, Meirav},
  year = {2019},
  publisher = {Cambridge University Press},
  doi = {10.1017/9781107415157},
  _bib2doi_finished = {true},
}

@inbook{Karp72,
  title = {Reducibility among Combinatorial Problems},
  author = {Karp, Richard M.},
  year = {1972},
  booktitle = {Complexity of Computer Computations: Proceedings of a symposium on the Complexity of Computer Computations, held March 20--22, 1972, at the IBM Thomas J. Watson Research Center, Yorktown Heights, New York, and sponsored by the Office of Naval Research, Mathematics Program, IBM World Trade Corporation, and the IBM Research Mathematical Sciences Department},
  publisher = {Springer US},
  pages = {85--103},
  doi = {10.1007/978-1-4684-2001-2_9},
  noeditor = {Miller, Raymond E. and Thatcher, James W. and Bohlinger, Jean D.},
  timestamp = {Wed, 16 Mar 2022 00:00:00 +0100},
  biburl = {https://dblp.org/rec/conf/coco/Karp72.bib},
  bibsource = {dblp computer science bibliography, https://dblp.org},
  _bib2doi_selected = {dblp:/rec/conf/coco/Karp72.bib},
  _bib2doi_confirmed = {true},
}

@article{DowneyF95,
  title = {Fixed-Parameter Tractability and Completeness {I}: Basic Results},
  author = {Downey, Rod G. and Fellows, Michael R.},
  year = {1995},
  journal = {SIAM Journal on Computing},
  volume = {24},
  number = {4},
  pages = {873--921},
  doi = {10.1137/S0097539792228228},
  timestamp = {Wed, 14 Nov 2018 00:00:00 +0100},
  biburl = {https://dblp.org/rec/journals/siamcomp/DowneyF95.bib},
  bibsource = {dblp computer science bibliography, https://dblp.org},
  _bib2doi_selected = {dblp:/rec/journals/siamcomp/DowneyF95.bib},
  _bib2doi_confirmed = {true},
}

@article{DBLP:journals/dam/RooijB11,
  author = {Johan M. M. van Rooij and Hans L. Bodlaender},
  title = {Exact algorithms for dominating set},
  journal = {Discret. Appl. Math.},
  volume = {159},
  number = {17},
  pages = {2147--2164},
  year = {2011},
  doi = {10.1016/j.dam.2011.07.001},
  timestamp = {Thu, 11 Feb 2021 00:00:00 +0100},
  biburl = {https://dblp.org/rec/journals/dam/RooijB11.bib},
  bibsource = {dblp computer science bibliography, https://dblp.org},
  _bib2doi_selected = {dblp:/rec/journals/dam/RooijB11.bib},
  _bib2doi_confirmed = {true},
}

@inproceedings{Hagerup12,
  title = {Simpler Linear-Time Kernelization for Planar Dominating Set},
  author = {Hagerup, Torben},
  year = {2012},
  booktitle = {Parameterized and Exact Computation},
  publisher = {Springer Berlin Heidelberg},
  pages = {181--193},
  doi = {10.1007/978-3-642-28050-4_15},
  timestamp = {Fri, 26 May 2017 01:00:00 +0200},
  biburl = {https://dblp.org/rec/conf/iwpec/Hagerup11.bib},
  bibsource = {dblp computer science bibliography, https://dblp.org},
  _bib2doi_selected = {dblp:/rec/conf/iwpec/Hagerup11.bib},
  _bib2doi_confirmed = {true},
}

@article{LokshtanovMS11,
  title = {A Linear Kernel for Planar Connected Dominating Set},
  author = {Daniel Lokshtanov and Matthias Mnich and Saket Saurabh},
  year = {2011},
  journal = {Theoretical Computer Science},
  volume = {412},
  number = {23},
  pages = {2536--2543},
  doi = {10.1016/j.tcs.2010.10.045},
  note = {Theory and Applications of Models of Computation (TAMC 2009)},
  keywords = {Connected dominating set, Kernelization, Planar graphs, Reduce or refine},
  timestamp = {Wed, 17 Feb 2021 00:00:00 +0100},
  biburl = {https://dblp.org/rec/journals/tcs/LokshtanovMS11.bib},
  bibsource = {dblp computer science bibliography, https://dblp.org},
  _bib2doi_selected = {dblp:/rec/journals/tcs/LokshtanovMS11.bib},
  _bib2doi_confirmed = {true},
}

@article{GarneroST17,
  title = {A linear kernel for planar red–blue dominating set},
  author = {Valentin Garnero and Ignasi Sau and Dimitrios M. Thilikos},
  year = {2017},
  journal = {Discrete Applied Mathematics},
  volume = {217},
  pages = {536--547},
  doi = {10.1016/j.dam.2016.09.045},
  keywords = {Parameterized complexity, Planar graphs, Linear kernels, Red–blue domination},
  timestamp = {Thu, 11 Feb 2021 00:00:00 +0100},
  biburl = {https://dblp.org/rec/journals/dam/GarneroST17.bib},
  bibsource = {dblp computer science bibliography, https://dblp.org},
  _bib2doi_selected = {dblp:/rec/journals/dam/GarneroST17.bib},
  _bib2doi_confirmed = {true},
}

@article{KimLLRRSS15,
  title = {Linear Kernels and Single-Exponential Algorithms Via Protrusion Decompositions},
  author = {Kim, Eun Jung and Langer, Alexander and Paul, Christophe and Reidl, Felix and Rossmanith, Peter and Sau, Ignasi and Sikdar, Somnath},
  year = {2015},
  month = {dec},
  journal = {ACM Trans. Algorithms},
  publisher = {Association for Computing Machinery},
  volume = {12},
  number = {2},
  doi = {10.1145/2797140},
  articleno = {21},
  numpages = {41},
  keywords = {Parameterized complexity, algorithmic meta-theorems, graph minors, hitting minors, sparse graphs},
  timestamp = {Wed, 25 Sep 2019 01:00:00 +0200},
  biburl = {https://dblp.org/rec/journals/talg/0002LPRRSS16.bib},
  bibsource = {dblp computer science bibliography, https://dblp.org},
  _bib2doi_selected = {dblp:/rec/journals/talg/0002LPRRSS16.bib},
  _bib2doi_confirmed = {true},
}

@mastersthesis{Halseth16,
  title = {A 43k Kernel for Planar Dominating Set using Computer-Aided Reduction Rule Discovery},
  author = {Johan Tor{\aa}s Halseth},
  year = {2016},
  month = {February},
  note = {Supervisor: Daniel Lokshtanov},
  school = {Department of Informatics, University of Bergen},
  type = {Master's thesis},
  _bib2doi_finished = {true},
}

@inproceedings{AbuKhzamCESW17,
  title = {Turbo-Charging Dominating Set with an FPT Subroutine: Further Improvements and Experimental Analysis},
  author = {Abu-Khzam, Faisal N. and Cai, Shaowei and Egan, Judith and Shaw, Peter and Wang, Kai},
  year = {2017},
  booktitle = {Theory and Applications of Models of Computation},
  publisher = {Springer},
  pages = {59--70},
  doi = {10.1007/978-3-319-55911-7_5},
  timestamp = {Fri, 17 Mar 2023 00:00:00 +0100},
  biburl = {https://dblp.org/rec/conf/tamc/Abu-KhzamCESW17.bib},
  bibsource = {dblp computer science bibliography, https://dblp.org},
  _bib2doi_selected = {dblp:/rec/conf/tamc/Abu-KhzamCESW17.bib},
  _bib2doi_confirmed = {true},
}

@inproceedings{WangCCY18,
  title = {A Fast Local Search Algorithm for Minimum Weight Dominating Set Problem on Massive Graphs},
  author = {Yiyuan Wang and Shaowei Cai and Jiejiang Chen and Minghao Yin},
  year = {2018},
  month = {7},
  booktitle = {{Proc. of the 27th International Joint Conference on Artificial Intelligence (IJCAI-18)}},
  publisher = {International Joint Conferences on Artificial Intelligence Organization},
  pages = {1514--1522},
  doi = {10.24963/ijcai.2018/210},
  timestamp = {Tue, 08 Jul 2025 01:00:00 +0200},
  biburl = {https://dblp.org/rec/conf/ijcai/WangCCY18.bib},
  bibsource = {dblp computer science bibliography, https://dblp.org},
  _bib2doi_selected = {dblp:/rec/conf/ijcai/WangCCY18.bib},
  _bib2doi_confirmed = {true},
}

@inproceedings{XiongX24,
  title = {Exactly Solving Minimum Dominating Set and Its Generalization},
  author = {Xiong, Ziliang and Xiao, Mingyu},
  year = {2024},
  month = {8},
  booktitle = {Proceedings of the Thirty-Third International Joint Conference on Artificial Intelligence, {IJCAI-24}},
  publisher = {International Joint Conferences on Artificial Intelligence Organization},
  pages = {7056--7064},
  doi = {10.24963/ijcai.2024/780},
  note = {Main Track},
  editor = {Kate Larson},
  timestamp = {Fri, 18 Oct 2024 01:00:00 +0200},
  biburl = {https://dblp.org/rec/conf/ijcai/Xiong024.bib},
  bibsource = {dblp computer science bibliography, https://dblp.org},
  _bib2doi_selected = {dblp:/rec/conf/ijcai/Xiong024.bib},
  _bib2doi_confirmed = {true},
}

@inproceedings{JiangZ23,
  title = {An Exact Algorithm for the Minimum Dominating Set Problem},
  author = {Jiang, Hua and Zheng, Zhifei},
  year = {2023},
  month = {8},
  booktitle = {Proceedings of the Thirty-Second International Joint Conference on Artificial Intelligence, {IJCAI-23}},
  publisher = {International Joint Conferences on Artificial Intelligence Organization},
  pages = {5604--5612},
  doi = {10.24963/ijcai.2023/622},
  timestamp = {Mon, 28 Aug 2023 01:00:00 +0200},
  biburl = {https://dblp.org/rec/conf/ijcai/JiangZ23.bib},
  bibsource = {dblp computer science bibliography, https://dblp.org},
  _bib2doi_selected = {dblp:/rec/conf/ijcai/JiangZ23.bib},
  _bib2doi_confirmed = {true},
}

@article{DBLP:journals/sigkdd/RossiA15,
  author = {Ryan A. Rossi and Nesreen K. Ahmed},
  title = {An Interactive Data Repository with Visual Analytics},
  journal = {{SIGKDD} Explor.},
  volume = {17},
  number = {2},
  pages = {37--41},
  year = {2015},
  timestamp = {Tue, 29 Sep 2020 01:00:00 +0200},
  biburl = {https://dblp.org/rec/journals/sigkdd/RossiA15.bib},
  bibsource = {dblp computer science bibliography, https://dblp.org},
  doi = {10.1145/2897350.2897355},
  _bib2doi_selected = {dblp:/rec/journals/sigkdd/RossiA15.bib},
  _bib2doi_confirmed = {true},
}

@article{DBLP:journals/netsci/BlasiusFKMPW22,
  author = {Thomas Bl{\"{a}}sius and Tobias Friedrich and Maximilian Katzmann and Ulrich Meyer and Manuel Penschuck and Christopher Weyand},
  title = {Efficiently generating geometric inhomogeneous and hyperbolic random graphs},
  journal = {Netw. Sci.},
  volume = {10},
  number = {4},
  pages = {361--380},
  year = {2022},
  timestamp = {Sat, 31 May 2025 01:00:00 +0200},
  biburl = {https://dblp.org/rec/journals/netsci/BlasiusFKMPW22.bib},
  bibsource = {dblp computer science bibliography, https://dblp.org},
  doi = {10.1017/nws.2022.32},
  _bib2doi_selected = {dblp:/rec/journals/netsci/BlasiusFKMPW22.bib},
  _bib2doi_confirmed = {true},
}

@article{retschmeier_masters_nodate,
  title = {On the Parameterized Complexity of Semitotal Domination on Graph Classes},
  language = {en},
  author = {Lukas Retschmeier},
  timestamp = {Fri, 11 Jul 2025 01:00:00 +0200},
  biburl = {https://dblp.org/rec/journals/corr/abs-2506-17485.bib},
  bibsource = {dblp computer science bibliography, https://dblp.org},
  doi = {10.48550/arXiv.2506.17485},
  journal = {CoRR},
  volume = {abs/2506.17485},
  year = {2023},
  eprinttype = {arXiv},
  _bib2doi_selected = {dblp:/rec/journals/corr/abs-2506-17485.bib},
  _bib2doi_confirmed = {true},
  _bib2doi_finished = {true},
}

@article{Garnero2018,
  title = {A Linear Kernel for Planar Total Dominating Set},
  volume = {Vol. 20 no. 1},
  issn = {1365-8050},
  doi = {10.23638/dmtcs-20-1-14},
  number = {Discrete Algorithms},
  journal = {Discrete Mathematics; Theoretical Computer Science},
  publisher = {Centre pour la Communication Scientifique Directe (CCSD)},
  author = {Garnero, Valentin and Sau, Ignasi},
  year = {2018},
  month = {may},
  timestamp = {Sun, 12 Feb 2023 00:00:00 +0100},
  biburl = {https://dblp.org/rec/journals/dmtcs/GarneroS18.bib},
  bibsource = {dblp computer science bibliography, https://dblp.org},
  _bib2doi_selected = {dblp:/rec/journals/dmtcs/GarneroS18.bib},
  _bib2doi_confirmed = {true},
}

@article{karsten1998,
  author = {Weihe, Karsten},
  year = {1998},
  month = {02},
  title = {Covering Trains by Stations or The Power of Data Reduction},
  journal = {Proc. ALEX'98},
  _bib2doi_finished = {true},
}

@article{Luo2013,
  title = {Improved linear problem kernel for planar connected dominating set},
  volume = {511},
  issn = {0304-3975},
  doi = {10.1016/j.tcs.2013.06.011},
  journal = {Theor. Comput. Sci.},
  publisher = {Elsevier BV},
  author = {Weizhong Luo and Jianxin Wang and Qilong Feng and Jiong Guo and Jianer Chen},
  year = {2013},
  month = {nov},
  pages = {2--12},
  timestamp = {Wed, 17 Feb 2021 00:00:00 +0100},
  biburl = {https://dblp.org/rec/journals/tcs/LuoWFGC13.bib},
  bibsource = {dblp computer science bibliography, https://dblp.org},
  _bib2doi_selected = {dblp:/rec/journals/tcs/LuoWFGC13.bib},
  _bib2doi_confirmed = {true},
  _bib2doi_finished = {true},
}

@article{gilbert1959random,
  title = {Random graphs},
  author = {Gilbert, Edgar N},
  journal = {The Annals of Mathematical Statistics},
  volume = {30},
  number = {4},
  pages = {1141--1144},
  year = {1959},
  publisher = {JSTOR},
  _bib2doi_finished = {true},
}

@inproceedings{guo2007,
  author = {Guo, Jiong and Niedermeier, Rolf},
  year = {2007},
  month = {07},
  pages = {375-386},
  title = {Linear Problem Kernels for NP-Hard Problems on Planar Graphs},
  isbn = {978-3-540-73419-2},
  doi = {10.1007/978-3-540-73420-8_34},
  timestamp = {Tue, 23 May 2017 01:00:00 +0200},
  biburl = {https://dblp.org/rec/conf/icalp/GuoN07.bib},
  bibsource = {dblp computer science bibliography, https://dblp.org},
  _bib2doi_selected = {dblp:/rec/conf/icalp/GuoN07.bib},
  _bib2doi_confirmed = {true},
}

@inproceedings{Iwata2012,
  title = {A Faster Algorithm for Dominating Set Analyzed by the Potential Method},
  isbn = {9783642280504},
  issn = {1611-3349},
  doi = {10.1007/978-3-642-28050-4_4},
  booktitle = {Parameterized and Exact Computation - 6th International Symposium, {IPEC} 2011, Saarbr{\"{u}}cken, Germany, September 6-8, 2011. Revised Selected Papers},
  publisher = {Springer},
  author = {Yoichi Iwata},
  year = {2011},
  pages = {41--54},
  timestamp = {Fri, 26 May 2017 01:00:00 +0200},
  biburl = {https://dblp.org/rec/conf/iwpec/Iwata11.bib},
  bibsource = {dblp computer science bibliography, https://dblp.org},
  volume = {7112},
  editor = {D{\'{a}}niel Marx and Peter Rossmanith},
  series = {Lecture Notes in Computer Science},
  _bib2doi_selected = {dblp:/rec/conf/iwpec/Iwata11.bib},
  _bib2doi_confirmed = {true},
}

@article{DBLP:journals/kbs/ZhuZWSL24,
  author = {Enqiang Zhu and Yu Zhang and Shengzhi Wang and Darren Strash and Chanjuan Liu},
  title = {A dual-mode local search algorithm for solving the minimum dominating set problem},
  journal = {Knowl. Based Syst.},
  volume = {298},
  pages = {111950},
  year = {2024},
  timestamp = {Fri, 19 Jul 2024 01:00:00 +0200},
  biburl = {https://dblp.org/rec/journals/kbs/ZhuZWSL24.bib},
  bibsource = {dblp computer science bibliography, https://dblp.org},
  doi = {10.1016/j.knosys.2024.111950},
  _bib2doi_selected = {dblp:/rec/journals/kbs/ZhuZWSL24.bib},
  _bib2doi_confirmed = {true},
}

@article{Alber2006,
  title = {Experiments on data reduction for optimal domination in networks},
  volume = {146},
  ISSN = {1572-9338},
  DOI = {10.1007/s10479-006-0045-4},
  number = {1},
  journal = {Annals of Operations Research},
  publisher = {Springer Science and Business Media LLC},
  author = {Alber,  Jochen and Betzler,  Nadja and Niedermeier,  Rolf},
  year = {2006},
  month = jun,
  pages = {105–117}
}

@inbook{Alber2006b,
  title = {A General Data Reduction Scheme for Domination in Graphs},
  ISBN = {9783540322177},
  ISSN = {1611-3349},
  DOI = {10.1007/11611257_11},
  booktitle = {SOFSEM 2006: Theory and Practice of Computer Science},
  publisher = {Springer Berlin Heidelberg},
  author = {Alber,  Jochen and Dorn,  Britta and Niedermeier,  Rolf},
  year = {2006},
  pages = {137–147}
}

@misc{sahili2025linearkernelplanarvector,
      title={A Linear Kernel for Planar Vector Domination}, 
      author={Mahabba El Sahili and Faisal N. Abu-Khzam},
      year={2025},
      eprint={2312.09374},
      archivePrefix={arXiv},
      primaryClass={math.CO},
}

\end{document}